\newtheorem{thm}{Theorem}
\newtheorem{obs}{Observation}
\newtheorem{lem}[thm]{Lemma}
\newtheorem{conj}{Conjecture}
\newenvironment{defn}[1][Definition]{\begin{trivlist}
\item[\hskip \labelsep {\bfseries #1}]}{\end{trivlist}}
\newenvironment{proofof}[1]{\begin{trivlist}
\item[\hskip \labelsep {\itshape Proof of #1.}]}{\end{trivlist}}
\newcommand{\U}{\,\mathcal{U}}
\newcommand{\G}{\,\mathcal{G}}
\newcommand{\C}{\,\mathcal{C}}
\newcommand{\N}{\mathbb{N}}
\newcommand{\R}{\mathbb{R}}
\newcommand{\Z}{\mathbb{Z}}
\newcommand{\wh}{\widehat}
\begin{document}

\title{Self-stabilizing uncoupled dynamics\thanks{Research supported in part by National Science Foundation Grant No. CCF-1101690.}}
\author{
Aaron D. Jaggard\inst{1}
\and Neil Lutz\inst{2}
\and Michael Schapira\inst{3}
\and Rebecca N. Wright\inst{4}}

\institute{
U.S. Naval Research Laboratory, Washington, DC 20375, USA. \email{aaron.jaggard@nrl.navy.mil}
\and
Rutgers University, Piscataway, NJ 08854, USA. \email{njlutz@cs.rutgers.edu}
\and
Hebrew University of Jerusalem, Jerusalem 91904, Israel. \email{schapiram@huji.ac.il}
\and
Rutgers University, Piscataway, NJ 08854, USA. \email{rebecca.wright@rutgers.edu}}

\tocauthor{Aaron D.\ Jaggard, Neil Lutz, Michael Schapira, Rebecca N.\ Wright}

\date{}
\maketitle

\begin{abstract}
Dynamics in a distributed system are self-stabilizing if they are guaranteed to reach a stable state regardless of how the system is initialized. Game dynamics are uncoupled if each player's behavior is independent of the other players' preferences. Recognizing an equilibrium in this setting is a distributed computational task. Self-stabilizing uncoupled dynamics, then, have both resilience to arbitrary initial states and distribution of knowledge. We study these dynamics by analyzing their behavior in a bounded-recall synchronous environment. We determine, for every ``size'' of game, the minimum number of periods of play that stochastic (randomized) players must recall in order for uncoupled dynamics to be self-stabilizing.  We also do this for the special case when the game is guaranteed to have unique best replies. For deterministic players, we demonstrate two self-stabilizing uncoupled protocols. One applies to all games and uses three steps of recall. The other uses two steps of recall and applies to games where each player has at least four available actions. For uncoupled deterministic players, we prove that a single step of recall is insufficient to achieve self-stabilization, regardless of the number of available actions.
\end{abstract}

\section{Introduction}
Self-stabilization is a failure-resilience property that is central to distributed computing theory and is the subject of extensive research (see, e.g.,~\cite{Dole00} for a survey). It is characterized by the ability of a distributed system to reach a stable state from every initial state. Dynamic interaction between strategic agents is a central research topic in game theory (see, e.g.,~\cite{FudTir91,NRTV07}).  One area of interest is \emph{uncoupled dynamics}, in which each player's strategy is independent of the other players' payoffs~\cite{HarMas12}.  Here, we bring together these two research areas and initiate the study of \emph{self-stabilizing uncoupled dynamics} within the broader research agenda of  \emph{distributed computing with adaptive heuristics}~\cite{JaScWr11}. This work is a first step, and the same questions we answer here can be asked for a broad variety of dynamics and notions of convergence and equilibria. These directions, as well as a conjecture, are discussed in Section~\ref{sec:5}.

We focus our investigation on a bounded-recall, synchronous setting. We consider self-stabilization in a multi-agent distributed system in which, at each timestep, the agents act as strategic players in a game, simultaneously selecting \emph{actions} from their respective finite action sets to form an \emph{action profile}. The space of action profiles is relevant throughout this work, and we refer to its size as the \emph{size} of the game. We study the effects of \emph{bounded recall}, in which the state of this system at any time consists of the $r$ most recent action profiles, for some finite $r$. The stable states in $r$-recall systems necessarily have the same action profile in $r$ consecutive time steps. In our context, we want stable states that are robust to players acting selfishly---i.e., those where the repeated action profile is an equilibrium of the stage game. In this paper, we consider pure Nash equilibria (PNE). Thus, in our setting, dynamics self-stabilize for a given game if, from every starting state, players are guaranteed to converge to a PNE. For games without PNE, dynamics cannot self-stabilize in this sense.  Throughout this paper, we say that particular dynamics \emph{succeed} on a class of games if they self-stabilize for games in that class whenever a PNE exists.

Traditional study of convergence to equilibria in game dynamics makes various assumptions about the ``reasonableness'' of players' behavior, restricting them to always play the game in ways that are somehow consistent with their self-interest given their current knowledge. In contrast to these behavioral restrictions on the players, uncoupledness is an informational restriction, in that the players have no knowledge of each other's payoffs. In this situation, no individual player can recognize a PNE, so finding an equilibrium is a truly distributed task. 

If uncoupledness is the only restriction on the dynamics, then the players can find a PNE through a straightforward exhaustive search.  However, this changes when players' abilities to remember past actions is restricted.  In a continuous-time setting, Hart and Mas-Colell~\cite{HarMas03} showed that deterministic uncoupled dynamics fail to reach a stable state for some games that have PNE if the dynamics must be \emph{historyless}, i.e., if the state space of the system is identical to the action profile space of the game. This suggests the central question that we address:
\begin{quotation}\emph{On a given class of games, how much recall do uncoupled players need in order to self-stabilize whenever a PNE exists?  That is, when are there successful $k$-recall dynamics?}
\end{quotation}

This question was answered in part by Hart and Mas-Colell~\cite{HarMas06}, who showed that in a discrete-time setting, even when players are allowed randomness, no historyless uncoupled dynamics succeed on all two-player games where each player has three actions. Moreover, they showed that even for \emph{generic} games (where at every action profile each player has a unique ``best'' action), no historyless uncoupled dynamics succeed on games with three three-action players. They also gave positive results, proving that there are historyless uncoupled dynamics that succeed on all two-player generic games, and that if the players have 2-recall (i.e., they are allowed to see the two most recent action profiles), then over every action profile space there are stochastic uncoupled dynamics that succeed on all games.

\paragraph{Our results.}
We show in Section \ref{sec:3} that there exist historyless uncoupled dynamics that succeed on all two-player games with a two-action player and on all three-player generic games with a two-action player (Theorems \ref{thm:6} and \ref{thm:11}). In both cases, we prove that these results are tight, in that they do not hold for any larger size of game (Theorems \ref{thm:7} and \ref{thm:13}). Combined with the results of Hart and Mas-Colell~\cite{HarMas06}, this provides a complete characterization of the exact minimum recall needed, for any action profile space, for uncoupled dynamics to succeed on all games over that space and on generic games over that space. In Section \ref{sec:4}, turning to deterministic dynamics, we demonstrate 3-recall deterministic uncoupled dynamics that succeed on all games (Theorem \ref{thm:16}) and 2-recall deterministic uncoupled dynamics that succeed on all games in which every player has at least four actions (Theorem \ref{thm:17}). We also prove for every action profile space that no historyless deterministic uncoupled dynamics succeed on all games over that space (Theorem \ref{thm:18}).

\paragraph{Related work.}
There are rich connections between distributed computing and game theory, some of which are surveyed by Halpern~\cite{Hal07}.  Jaggard, Schapira, and Wright~\cite{JaScWr11} investigated convergence to pure Nash equilibria by game dynamics in asynchronous distributed systems.  Most closely related to our specific setting, Hart and Mas-Colell introduced the concept of uncoupled game dynamics~\cite{HarMas03}.  In addition to the results mentioned above, they also addressed convergence to mixed Nash equilibria by bounded-recall uncoupled dynamics~\cite{HarMas06}. Babichenko investigated the situation when the uncoupled players are finite-state automata, as well as \emph{completely uncoupled dynamics}, in which each player can see only the history of its own actions and payoffs \cite{Babi06,Babi12}. Hart and Mansour~\cite{HarMan10} analyzed the time to convergence for uncoupled dynamics. 

\section{Preliminaries}

We begin with definitions of the concepts used in the paper.

\paragraph{Games.} Let $n\in\N$ and $(k_1,...,k_n)\in\N^n$, with $n\geq2$ and each $k_i\geq 2$. A \emph{game} of \emph{size} $(k_1,...,k_n)$ is a pair $(A,U)$, where $A=A_1\times...\times A_n$ such that each $|A_i|=k_i$, and $U=(u_1,...,u_n)$ is an $n$-tuple of functions $u_i:A\to \R$. $A_i$ and $u_i$ are the \emph{action set} and \emph{utility function} of \emph{player} $i$. $\Delta(A_i)$, the probability simplex over $A_i$, is player $i$'s set of \emph{mixed actions}. When $n$ is small, we may describe a game $(A,U)$ as a $k_1$-\emph{by}-...-\emph{by}-$k_n$ game. Elements of $A$ are the \emph{(action) profiles} of the game, and $A$ is called the \emph{(action) profile space}. $\U(A)$ is the the class of all $U$ such that each $u_i$ takes $A_i$ as input, so $A\times\U(A)$ is the class of all games with profile space $A$. When $A$ is clear from context, we often identify the game with the utility function vector $U$.

Let $U\in\U(A)$. For $i\in\{1,...,n\}$ and $a=(a_1,...,a_n)\in A$, we say that player $i$ is $U$-\emph{best-replying} at $a$ if $u_i(a)\geq u_i((a_1,...,a^\prime_i,...,a_n))$ for every $a^\prime_i\in A_i$. We define the set of $U$-\emph{best-replies} for player $i$ at $a$,
\[BR^U_i(a)=\{a^\prime_i\in A_i:i\mbox{ is }U\mbox{-best-replying at }(a_1,...,a^\prime_i,...,a_n)\}.\]
We omit $U$ from this notation when the game being played is clear from context. A profile $p\in A$ is a \emph{pure Nash equilibrium}, abbreviated \emph{PNE}, for $U$ if every player $i\in\{1,...,n\}$ is best-replying at $p$.  An action $a_i\in A_i$ is \emph{weakly dominant} for player $i$ if $a_i\in BR_i(x)$ for every $x\in A$; it is \emph{strictly dominant} for player $i$ if $BR_i(x)=\{a_i\}$ for every $x\in A$.

A game $(A,U)\in A\times\U(A)$ is \emph{generic} if every player's best-replies are unique, i.e., if for every $a\in A$ and $i\in\{1,...,n\}$, $|BR^U_i(a)|$=1. For generic games $(A,U)$ we may abuse notation slightly by using $BR^U_i(a)$ to refer to this set's unique element.  $A\times\G(A)$ is the class of all generic games on $A$.
\paragraph{Dynamics.} We now consider the repeated play of a game. Let the profile at timestep $t\in\Z$ be $a^{(t)}=\big(a^{(t)}_1,...,a^{(t)}_n\big)\in A$. The \emph{stage game} $(A,U)\in A\times\U(A)$ is then played: each player $i$ simultaneously selects a new action $a^{(t+1)}_i$ by applying an $r$-\emph{recall} \emph{stationary} \emph{strategy} $f^U_i:A^r\to \Delta(A_i)$, where $r\in\N$ and $A^r$ is the Cartesian product of $A$ with itself $r$ times. A \emph{deterministic} r-recall stationary strategy mapping ranges over $A_i$ instead of $\Delta(A_i)$. The strategy $f^U_i$, which is \emph{stationary} in the sense that it does not depend on $t$, will take as input $(a^{(t-r+1)},...,a^{(t)})$, the $r$ most recent profiles. We call this $r$-tuple the \emph{state} at time $t$. The terms $1$-\emph{recall} and \emph{historyless} are interchangeable. A \emph{strategy vector} is an $n$-tuple $f^U=(f^U_1,...,f^U_n)$, where each $f^U_i$ is a strategy for player $i$. $\mathcal{F}(A)$ will denote the set of all strategy vectors for $A$.

A \emph{strategy mapping} for $A$ is a mapping $f:\U(A)\to\mathcal{F}(A)$ that assigns to each $U$ a strategy vector $f^U$. A strategy mapping $f$ is \emph{uncoupled} if the strategy it assigns each player depends only on that player's utility function and not, e.g., on the other players' payoffs. That is, there are mappings $f_1,...,f_n$ where each $f_i$ maps utility functions on $A$ to strategies for $A$, such that $f_i(u_i)\equiv f^U_i$ for $i=1,...,n$. If $f^U_i$ is stationary, deterministic, or $r$-recall for $i=1,...,n$, then $f^U$ is also. If every $f^U$ has any of those properties, then $f$ does also.

Now let $x=\big(x^{(1)},...,x^{(r)}\big)\in A^r$, and let $f^U$ be an $r$-recall strategy vector. For $T\geq r$, a \emph{partial} $f^U$-\emph{run} for $T$ steps starting from $x$ is a tuple of profiles $\big(a^{(1)},...,a^{(T+r)}\big)\in A^{T+r}$ such that $x=(a^{(1)},...,a^{(r)})$ and for every $r<t\leq T+r$,
\[\Pr\left(f^U\big(a^{(t-r)},...,a^{(t-1)}\big)=a^{(t)}\right)>0.\]
An $f^U$-\emph{run} is an infinite sequence of profiles $a^{(1)},a^{(2)},...$ such that every finite prefix is a partial $f^U$-run. We say that $y\in A^r$ is $f^U$-\emph{reachable} from $x\in A^r$ if there exist a $T\in\N$ and a partial $f^U$-run $\big(a^{(1)},...,a^{(T+r)}\big)$ such that  $x=(a^{(1)},...,a^{(r)})$ and $y=\big(a^{(T)},...,a^{(T+r)}\big)$. The state $x$ is an $f^U$-\emph{absorbing state} if for every $f^U$-run $a^{(1)},a^{(2)},...$ beginning from $x$, $\big(a^{(t+1)},...,a^{(t+r)}\big)=x$ for every $t\in\N$. Notice that any $f^U$-absorbing state $x=\big(a^{(1)},...,a^{(r)}\big)$ must have $a^{(1)}=...=a^{(r)}$. We omit the strategy vector from this notation when it is clear from context. The \emph{game dynamics} of $f$ consist of all pairs $(U,R)$ such that $R$ is an $f^U$-run.

\paragraph{Convergence.} A sequence of profiles $a^{(1)},a^{(2)},...$ \emph{converges} to a profile $a$ if there some $T\in \N$ such that $a^{(t)}=a$ for every $t\geq T$. If from every $x\in A^r$, some $f^U$-absorbing PNE is $f^U$-reachable, then $f$ \emph{self-stabilizes} on game $(A,U)$. We say that $f$ \emph{succeeds} on a game $U$ if $f$ self-stabilizes on $(A,U)$ or if $(A,U)$ has no PNE. Let $\C(A)$ be a class of games on $A$. If $f$ succeeds on every game $(A,U)\in A\times \mathcal{C}(A)$, then $f$ \emph{succeeds} on $\C(A)$.

Let  $A=A_1\times...\times A_n$ and $B=B_1\times...\times B_n$ be profile spaces of the same size, in the sense that there is some permutation $\pi$ on $\{1,...,n\}$ such that  $(|A_1|,...,|A_n|)=(|B_{\pi(1)}|,...,|B_{\pi(n)}|)$. Then we write $A\simeq B$. If $f$ succeeds on $\C(A)$, then there is a strategy mapping derived from $f$ that succeeds on $\C(B)$, simply by rearranging the players and bijectively mapping actions in each $A_i$ to actions in $B_{\pi(i)}$. This new strategy mapping retains any properties of $f$ that are of interest here (uncoupledness, $r$-recall, stationarity, and determinism). For this reason we define
\[\C(|A_1|,...,|A_n|)=\bigcup_{B\simeq A} \C(B),\]
and we say that $f$ succeeds on $\C(|A_1|,...,|A_n|)$ if $f$ succeeds on $\C(B)$ for some $B\simeq A$. For example, ``$f$ succeeds on $\G(2,3)$'' means ``$f$ self-stabilizes on every generic 2-by-3 game with a PNE (up to renaming of actions).''

\section{Stochastic uncoupled dynamics}\label{sec:3}

In this section, we determine, for every profile space $A$, the minimum $r\in\N$ such that an uncoupled $r$-recall stationary strategy mapping exists that succeeds on all games $(A,U)\in A\times\U(A)$ or all generic games $(A,U)\in A\times\G(A)$. Hart and Mas-Colell~\cite{HarMas06} proved that $2$-recall is sufficient to succeed on all games, 1-recall is sufficient to succeed on generic two-player games, and that 1-recall is not sufficient to succeed on all games, or even all generic games. We state these results in the present setting.
\begin{thm}[Hart and Mas-Colell~\cite{HarMas06}]\label{thm:1}
For any profile space $A$, there exists an uncoupled $2$-recall stationary strategy mapping that succeeds on all games $(A,U)$.
\end{thm}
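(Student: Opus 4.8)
The plan is to construct, for an arbitrary profile space $A$, an explicit uncoupled $2$-recall stationary strategy mapping and argue that from every state it can reach an absorbing PNE whenever one exists. The key idea is to use the two most recent profiles not as genuine ``memory'' of play but as a channel through which each player can privately encode and decode a small amount of bookkeeping information — essentially a synchronized search process. I would have the players interpret the pair $(a^{(t-1)}, a^{(t)})$ as consisting of a ``candidate profile'' (say $a^{(t)}$) together with a ``phase bit'' or ``counter'' derived from the relationship between $a^{(t-1)}$ and $a^{(t)}$ (for instance, whether $a^{(t-1)} = a^{(t)}$, or some fixed function of the coordinates). In a ``test'' phase, every player $i$ checks whether its current action $a^{(t)}_i$ lies in $BR_i^U(a^{(t)})$; since the strategy is uncoupled this check uses only $u_i$ and is therefore legitimate. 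If player $i$ is best-replying it keeps its action and signals ``satisfied'' by repeating the profile; if not, it deviates to a best reply and the resulting disagreement between consecutive profiles signals ``not yet an equilibrium,'' triggering a fresh round of the search.

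The main steps, in order, would be: (1) fix an encoding of $A^2$ into (candidate profile, control state) pairs and define the transition rule of each $f_i^U$ on each control state; (2) verify uncoupledness, stationarity, and $2$-recall hold by construction; (3) show that whenever the current candidate profile is a PNE and all players detect this, the dynamics enter the constant run $p, p, p, \dots$, i.e.\ the state $(p,p)$ is absorbing — this is where one must be careful that no player has an incentive-driven escape, which holds precisely because $p$ being a PNE means every $BR_i^U(p) \ni p_i$; (4) show that from every starting state the search eventually enumerates all profiles (or at least reaches some PNE), so that if a PNE exists the dynamics reach an absorbing PNE with positive probability along some run — here randomness is used to break out of cycles and to drive the exhaustive search forward, e.g.\ by having players move to the ``next'' candidate according to some fixed ordering of $A$ while injecting enough randomization that the reachability relation connects every state to an absorbing PNE. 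Because self-stabilization here only requires that some absorbing PNE be $f^U$-reachable from every state (not almost-sure convergence), the analysis reduces to showing the reachability graph on $A^2$ has the right structure.

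The hard part will be step (1) combined with (4): designing the control-state encoding so that the search is genuinely synchronized across players using only $2$-recall and only each player's own utility function. The subtlety is that a player who deviates to a best reply thereby corrupts the ``candidate profile'' coordinate that other players are reading, so the protocol must be arranged so that such a corruption is uniformly interpreted by everyone as ``restart the search from a new candidate'' rather than causing players to fall out of sync into inconsistent phases. One clean way to handle this is to let the pair of consecutive profiles encode ``(proposed profile $q$, currently displayed profile $a$)'': in a given step every player simultaneously checks best-response against $a$ and, based on the outcome, either certifies $a$ (by setting the next displayed profile to $a$ and the next proposal to the successor of $a$ in the fixed ordering) or rejects it (by jumping both coordinates to a new proposal). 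Showing this never deadlocks, always makes progress through the ordering on profiles when no PNE is certified, and stabilizes exactly at PNE states is the crux; once that invariant is established, the remaining verifications are routine. I would expect the construction to mirror Hart and Mas-Colell's original argument, so an alternative to reconstructing it from scratch would be to cite~\cite{HarMas06} directly and merely translate their dynamics into the notation of this paper, which is in fact what the statement's attribution suggests.
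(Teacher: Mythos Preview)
The paper does not prove this theorem at all; it is stated as a result of Hart and Mas--Colell and simply cited, exactly as you anticipate in your final sentence. So in that narrow sense your closing remark matches the paper perfectly.

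That said, your proposed reconstruction is considerably more elaborate than the actual Hart--Mas--Colell argument, and the elaboration is in a direction that is unnecessary for the stochastic setting. You are designing an encoded, synchronized exhaustive search through a fixed ordering of $A$, with phase bits and control states carried in the pair $(a^{(t-1)},a^{(t)})$. That machinery is what is needed for \emph{deterministic} uncoupled dynamics---indeed, it is essentially the idea behind the paper's own Theorems~\ref{thm:16} and~\ref{thm:17}---but it is overkill here. The Hart--Mas--Colell $2$-recall stochastic rule is much simpler: roughly, if the last two profiles coincide ($a=b$), each player who is best-replying at $b$ stays put and each player who is not randomizes over $A_i$; if the last two profiles differ, players randomize (or play an inertial mixture). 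Randomness alone then guarantees that from any state one can reach $(p,p)$ for any PNE $p$ with positive probability, and $(p,p)$ is absorbing. There is no need to maintain a counter, a candidate ordering, or a synchronized phase, and the ``hard part'' you identify---keeping players in sync when a deviation corrupts the encoded state---simply does not arise.

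So your plan would likely work, but it solves a harder problem than the one posed; the simpler randomized rule both suffices and is what the cited reference actually does.
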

\begin{thm}[Hart and Mas-Colell~\cite{HarMas06}]\label{thm:2}
There is no uncoupled historyless stationary strategy mapping that succeeds on all 3-by-3 games, or on all 3-by-3-by-3 generic games.
\end{thm}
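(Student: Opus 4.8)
The plan is to argue by contradiction: suppose $f$ is an uncoupled, historyless, stationary strategy mapping that succeeds on every 3-by-3 game. Since $f$ is historyless, the state space is just the profile space $A = A_1 \times A_2$, and since $f$ is uncoupled, the transition rule of player $i$ from a profile $a$ depends only on $u_i$ and $a$; this is exactly what lets us reason about one player's utility while varying the opponent's. Two elementary facts will be used throughout. First, if a game has a \emph{unique} PNE $p$, then success forces $p$ to be $f^U$-absorbing: it is the only PNE, hence the only state that could be an absorbing PNE, and success makes some absorbing PNE reachable from the state $p$ itself, so $p$ is absorbing and both players repeat their actions at $p$ with probability $1$. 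Second, to show $f$ fails on a particular game $(A,U)$ with unique PNE $p$, it suffices to exhibit a nonempty $S \subseteq A$ with $p \notin S$ that is \emph{closed under the support of} $f^U$ --- for every $a \in S$, every profile in the support of $f^U(a)$ lies in $S$ --- since then $p$ is unreachable from any state in $S$.

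The core of the argument is to build a ``merry-go-round'' of 3-by-3 games $G_0, G_1, \dots, G_{m-1}$, each with a \emph{unique} PNE $p_j$, with the $p_j$ distinct and cyclically arranged, and with consecutive games sharing exactly one player's utility function, alternately player $1$'s and player $2$'s. On the subspace where the opponent of player $i$ uses only two of its three actions, player $i$'s utility is chosen to induce a cyclic, matching-pennies-like best-reply pattern; the ``third'' action merely houses and isolates the equilibria $p_j$. The shared-utility linkage forces $f$ to behave identically on the relevant profiles in adjacent games, and one chases these forced choices around the cycle: whatever $f$ does, the forced behavior assembles, in at least one game $G_j$, into a support-closed set $S$ avoiding $p_j$, so $f$ fails on $G_j$ by the second fact above --- a contradiction. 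Carrying this out --- pinning down each player's utility on the two-action subspaces, verifying that each $G_j$ genuinely has a unique PNE, and checking that the forced moves close up into a trap rather than leaking to an equilibrium --- is the main technical work, and the cyclic best-reply pattern needs three actions per player, which is precisely why the statement concerns 3-by-3 and not smaller games.

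For the second statement, the two-player construction cannot be reused verbatim, since it exploits ties in the players' best replies while the claim concerns generic 3-by-3-by-3 games. The plan is to lift it by adding a third player with three actions whose sole function is to break ties: players $1$ and $2$ are given essentially the utilities of the two-player merry-go-round, now allowed to depend mildly on player $3$'s action, and player $3$ is given a utility making its own best reply unique and consistent with the intended equilibria, so that the same cyclic obstruction still binds players $1$ and $2$. One then adds generic infinitesimal perturbations to all three utility functions, preserving every strict inequality, so that $|BR^U_i(a)| = 1$ for all $a$ and all $i$ while the uniqueness of the equilibria and the support-closed trap survive unchanged; the earlier contradiction then applies. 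The only delicate point here is the bookkeeping: verifying that the third player's presence and the perturbation leave the forced cycle intact.
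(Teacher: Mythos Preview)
Your proposal is a plan rather than a proof, and the plan diverges from the route the paper (following Hart and Mas--Colell) actually takes. In the paper the argument is factored through Observation~\ref{obs:4}: by considering, for each player separately, auxiliary games in which that player has a strict or unique best reply, one shows that \emph{any} successful uncoupled historyless $f$ must (i) keep a best-replying player put with probability~$1$ and (ii) move a non-best-replying player with positive probability. These two constraints are then applied to a \emph{single} explicit game. For the $3\times3\times3$ generic case that game is written out in full in the proof of Lemma~\ref{lem:15}: it has unique PNE $(3,3,3)$, and Observation~\ref{obs:4} forces the invariant ``some player plays $1$ and some player plays $2$'' to be preserved by $f^U$, so the set of such profiles is a support-closed trap. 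The $3\times3$ case is handled by a similarly explicit game (not reproduced here, but of the same flavor). The point is that the cycle of auxiliary games is used only to pin down Observation~\ref{obs:4}; once that is in hand, one game suffices.

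Your ``merry-go-round'' instead tries to link a cycle of games $G_0,\dots,G_{m-1}$ directly and argue that the forced behavior must close up into a trap in \emph{some} $G_j$. Two things are missing. First, you never construct the games, so there is nothing to check; saying the third action ``houses and isolates'' the equilibria and that a matching-pennies pattern lives on a $2\times2$ subspace does not yet give a game with a unique PNE and the right best-reply structure. Second, and more seriously, the inference ``whatever $f$ does, the forced behavior assembles \dots\ into a support-closed set $S$ avoiding $p_j$'' is asserted, not argued: you have not said what is forced, why the forced moves close up rather than leak, or why the leak cannot happen in every $G_j$ simultaneously. The paper's decomposition avoids this difficulty entirely, because Observation~\ref{obs:4} already tells you exactly what is forced at every profile of any single game. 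For the generic $3\times3\times3$ part, your perturbation sketch is likewise too loose: the paper does not perturb a tied two-player game but writes down a genuinely generic three-player game from scratch, and the invariant is verified directly from the payoff matrices.
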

\begin{thm}[Hart and Mas-Colell~\cite{HarMas06}]\label{thm:3}
For any two-player profile space $A$, there is an uncoupled historyless stationary strategy mapping that succeeds on all generic games $(A,U)$.
\end{thm}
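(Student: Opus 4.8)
The plan is to exhibit explicit dynamics and then check the two properties that ``succeeds'' demands. To each player $i$ I assign the historyless, uncoupled rule: reading the current profile $a=(a_1,a_2)$, if $a_i$ is player $i$'s best reply to $a$ --- which by genericity is the unique element of $BR^U_i(a)$ --- replay $a_i$; otherwise play an action chosen uniformly at random from $A_i$. This is stationary, historyless, and depends only on $u_i$, so it defines a valid uncoupled strategy mapping $f$ for every $U\in\G(A)$, and it works for arbitrary $k_1,k_2\geq 2$.

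First I would pin down the absorbing states. A one-profile state $a$ is absorbing exactly when every player replays $a_i$ with probability $1$ from $a$, which under this rule happens precisely when every player is best-replying at $a$ --- that is, precisely when $a$ is a PNE. Hence every PNE is absorbing, and (since $f$ succeeds vacuously when no PNE exists) it remains to prove: from every profile, some PNE is $f^U$-reachable whenever $(A,U)$ has one.

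The key mechanism is that whenever \emph{both} players fail to best-reply at a profile $a$, the next profile is uniformly distributed over all of $A_1\times A_2$, so in one step we may reach any profile we like --- in particular a PNE $p$, which is then absorbing, finishing the argument. So it suffices to show that from every profile we can reach either a PNE or a profile at which neither player best-replies. Write $S(a)=\{i:a_i\notin BR^U_i(a)\}$. The cases $S(a)=\emptyset$ ($a$ is already a PNE) and $|S(a)|=2$ are immediate. The one substantive case is $|S(a)|=1$, say $S(a)=\{1\}$: player $2$ is ``frozen,'' and player $1$'s randomization lets us reach $(a_1',a_2)$ for any $a_1'$. If some $a_1'$ gives a profile with $|S|=2$, or if $a_1'=BR^U_1(a_2)$ gives a PNE, we are done. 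Otherwise one checks that $BR^U_2$ equals $a_2$ on every action except $q:=BR^U_1(a_2)$, and $BR^U_2(q)\neq a_2$; moving player $1$ to $q$ then produces a profile with $S=\{2\}$, and we repeat the argument with the players' roles swapped. If that step also fails to reach a PNE or a $|S|=2$ profile, the same computation forces a correspondingly rigid form on $BR^U_1$, and comparing the two forms shows no profile can be a mutual best reply --- i.e.\ $(A,U)$ has no PNE, contradicting the hypothesis. Hence the bad branch never occurs, and from every profile a PNE is reachable within at most three steps.

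Two-player structure and genericity are exactly what make this last step run: best replies are single-valued maps $A_2\to A_1$ and $A_1\to A_2$, and the ``cannot make both players non-best-replying'' condition constrains each of these maps tightly enough to rule out any mutual best reply. I expect this comparison-of-rigid-forms argument to be the step requiring the most care; the remainder --- well-definedness of $f$, the identification of absorbing states with PNE, the $|S|\in\{0,2\}$ cases, and translating ``positive-probability path'' into $f^U$-reachability --- is routine.
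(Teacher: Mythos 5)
Your dynamics are exactly the canonical mapping $h$ that the paper defines immediately after this theorem (``if best-replying, repeat; otherwise randomize uniformly''), so you have reconstructed the right object. The paper itself does not reproduce a proof of Theorem~\ref{thm:3} --- it is cited from Hart and Mas-Colell --- so your reachability argument is the substantive content, and it checks out. The absorbing-state identification and the $|S|\in\{0,2\}$ cases are fine, and the $|S|=1$ case works as you describe: with $S(a)=\{1\}$ and $q=BR^U_1(a_2)$, the failure of every one-step continuation forces $BR^U_2(x)=a_2$ for all $x\neq q$ and $BR^U_2(q)\neq a_2$; the symmetric failure from $(q,a_2)$ (which indeed has $S=\{2\}$) forces $BR^U_1(y)=q$ for all $y\neq r:=BR^U_2(q)$ and $BR^U_1(r)\neq q$; and a two-case check ($x=q$ versus $x\neq q$, using $a_2\neq r$) shows these two rigid forms admit no mutual best reply, contradicting the existence of a PNE. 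So one of the two probes must succeed, and a PNE is reached within three steps. Genericity (single-valued best replies) and the two-player structure (so that ``both non-best-repliers randomize'' makes the next profile uniform over all of $A$) are used exactly where you say they are. The proof is correct and self-contained; the only cosmetic point is the notational shortcut of writing $BR^U_1(a_2)$ as a function of the opponent's action alone, which is harmless for two players.
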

We now describe the strategy mapping given in the proof of Theorem \ref{thm:3}. Notice that for a historyless stationary strategy mapping, the state space is exactly the profile space, so the terms \emph{state} and \emph{profile} are interchangeable in this context.
\begin{defn}
For any $n$-player profile space $A$, the \emph{canonical} historyless uncoupled stationary strategy mapping for $A$ is $h:\U(A)\to\mathcal{F}(A)$, defined as follows. Let $U=(u_1,...,u_n)\in\U(A)$. Then $h(U)=(h^U_1,...,h^U_n)$, where for $i\in\{1,...,n\}$, $h^U_i:A\to A_i$ is given by
\begin{align*}
\Pr\left(h^U(a_i)=a_i\,|\,a_i\in BR_i(a)\right)&=1\\
\Pr\left(h^U(a_i)=b_i\,|\,a_i\not\in BR_i(a)\right)&=1/k_i,
\end{align*}
\end{defn}
for all $a_i,b_i\in A_i$. That is, if player $i$ is already best replying, then it will continue to play the same action. Otherwise, $i$ will play an action chosen uniformly at random from its action set.

In their proof of Theorem \ref{thm:2}, Hart and Mas-Colell make the following observation.
\begin{obs}[Hart and Mas-Colell~\cite{HarMas06}]\label{obs:4}
Suppose $f$ is an uncoupled historyless stationary strategy mapping for profile space $A$ and $f$ succeeds on all generic games $(A,U)$.  Then two conditions hold for every game $(A,U)$ and $a=(a_1,...,a_n)\in A$. First, if player $i$ is best-replying at $a$, then $Pr(f^U_i(a)=a_i)=1$. Second, if player $i$ is not best replying at $a$, then $\Pr(f^U_i(a)=a^\prime_i)>0$ for some $a^\prime_i\in A_i\smallsetminus\{a_i\}$.
\end{obs}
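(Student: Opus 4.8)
The plan is to leverage uncoupledness: because $f^U_i = f_i(u_i)$ depends only on player $i$'s own utility, I can freely reshape the utilities of all players $j \neq i$ without altering $f^U_i$, and in particular turn the given game into a \emph{generic} game possessing a single, explicitly chosen PNE, to which the hypothesis ``$f$ succeeds on all generic games'' applies. (We may assume $(A,U)$ is itself generic: perturbing the utilities of the players other than $i$ leaves $f^U_i$ unchanged.) Since $f$ is historyless, a state is just a profile, and unwinding the definitions gives the equivalence I will use repeatedly: a profile $p$ is an $f^U$-absorbing state if and only if $\Pr(f^U(p)=p)=1$, equivalently $\Pr(f^U_k(p)=p_k)=1$ for every player $k$.

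Here is the gadget. Fix player $i$ and a profile $a=(a_1,\dots,a_n)$, and let $b_i\in A_i$ be player $i$'s unique best reply to $a_{-i}$ under $u_i$. Form a game $U^*$ by keeping $u_i$ and, for each $j\neq i$, choosing a generic utility $u^*_j$ in which $a_j$ is strictly dominant (give $a_j$ a large constant bonus and break the remaining ties by a tiny generic perturbation). Then $U^*$ is generic, $f^{U^*}_i=f^U_i$ by uncoupledness, and the unique PNE of $U^*$ is $p^*=(a_{-i},b_i)$. As $U^*$ has a PNE and $f$ succeeds on all generic games, $f$ self-stabilizes on $U^*$; and since $p^*$ is the \emph{only} PNE, the requirement that from the state $p^*$ some absorbing PNE be reachable forces $p^*$ itself to be an absorbing state, i.e.\ $\Pr(f^{U^*}(p^*)=p^*)=1$.

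First I would settle the first condition. If player $i$ best-replies at $a$, then $a_i=b_i$ by genericity, so $p^*=a$ and the line above gives $\Pr(f^{U^*}(a)=a)=1$; projecting onto coordinate $i$ yields $\Pr(f^U_i(a)=a_i)=1$. This reasoning is valid for every generic game, profile, and player, so I may feed it back into $U^*$: for each $j\neq i$ the action $a_j$ is a best reply at $a$ (being strictly dominant), hence $\Pr(f^{U^*}_j(a)=a_j)=1$. Now the second condition, by contradiction. Suppose player $i$ is not best-replying at $a$ but the conclusion fails; as $f^U_i(a)$ is a distribution on $A_i$, this means $\Pr(f^U_i(a)=a_i)=1$. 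Here $a_i\neq b_i$, so $a\neq p^*$ and $a$ is not a PNE of $U^*$. But from the state $a$ in $U^*$, player $i$ plays $a_i$ with probability $1$ (our assumption, via $f^{U^*}_i=f^U_i$) and each $j\neq i$ plays $a_j$ with probability $1$ (the instance of the first condition just obtained), so $\Pr(f^{U^*}(a)=a)=1$ and $a$ is an absorbing state. No PNE is reachable from a non-PNE absorbing state, contradicting that $f$ self-stabilizes on $U^*$, a game with the PNE $p^*$.

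The main obstacle is not any one computation but the careful threading of the convergence definitions: one must observe that ``$f$ succeeds'' delivers self-stabilization only because each gadget was built to have a PNE; that ``the unique PNE is absorbing'' comes from the reachability condition being imposed at \emph{every} start state, the PNE included; and that for historyless dynamics ``absorbing'' is exactly ``the strategy vector fixes the profile almost surely'', which is what permits reading off the per-player statements and bootstrapping the second condition from the first. Designing $U^*$ and checking that the small perturbations keep it generic without disturbing strict dominance is routine.
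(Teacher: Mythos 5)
The paper states this observation without proof (it is imported from Hart and Mas-Colell), so there is no in-paper argument to compare against; judged on its own merits, your gadget is the natural and essentially correct argument. Fixing $u_i$ and giving each other player $j$ a generic utility under which $a_j$ is strictly dominant yields a generic game $U^*$ with unique PNE $p^*=(a_{-i},b_i)$; self-stabilization on $U^*$ forces that unique PNE to be absorbing, which for historyless dynamics is exactly $\Pr(f^{U^*}(p^*)=p^*)=1$; and bootstrapping the first condition into the players $j\neq i$ of $U^*$ gives the second condition by contradiction. The small steps you flag all check out, including that no independence across players is needed (a union bound suffices to pass between the per-player and joint statements).

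The one genuine gap is the parenthetical ``we may assume $(A,U)$ is itself generic.'' Perturbing the utilities of the players other than $i$ leaves $u_i$ untouched, so it cannot repair non-genericity of $u_i$; and your construction needs $u_i$ to have unique best replies both to define $b_i$ and, more importantly, to make $U^*$ a \emph{generic} game to which the hypothesis applies. This cannot be patched: a non-generic $u_i$ occurs in no generic game at all, so the hypothesis ``$f$ succeeds on all generic games'' places no constraint whatsoever on $f_i(u_i)$ in that case. (Concretely, let $f$ agree with $h$ on generic utility functions and let $f_i$ of a constant $u_i$ always play a fixed action; this $f$ still succeeds on all generic games but violates the first condition, since a player with constant utility is best-replying everywhere.) What you have actually proved is the observation restricted to games in which player $i$'s own utility function has unique best replies. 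That restriction reflects an imprecision in the statement as transcribed (``every game'') rather than a flaw in your strategy, but the proof should state the hypothesis it really uses rather than assert a reduction that does not go through.
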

Informally, no player can move when it is best-replying, and each player must move w.p.p. whenever it is not best-replying. The first condition guarantees that every PNE is an absorbing state; the second guarantees that no non-PNE is an absorbing state. Implicit in the same proof is the fact that $h$ is at least as ``powerful'' as any other historyless uncoupled strategy mapping.

\begin{obs}[Hart and Mas-Colell~\cite{HarMas06}]\label{obs:5}
If any historyless uncoupled stationary strategy mapping succeeds on $\U(A)$ or on $\G(A)$, then $h$ succeeds on that class.
\end{obs}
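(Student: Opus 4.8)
\begin{proofof}{Observation~\ref{obs:5}}
This is a plan, not the full proof. The idea is to show that, for every game, the canonical mapping $h$ has a reachability relation at least as large as that of any competing historyless uncoupled $f$, while having exactly the same absorbing states, so that self-stabilization transfers from $f$ to $h$. Suppose $f$ is a historyless uncoupled stationary strategy mapping that succeeds on $\C(A)$, where $\C$ is either $\U$ or $\G$. In both cases $f$ succeeds on all generic games $(A,U)$, so the hypothesis of Observation~\ref{obs:4} is met, and its two conditions hold for every game $(A,U)$ and every profile $a$.

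First I would record the key comparison of transition supports: for every game $(A,U)$, every player $i$, and every profile $a$, the support of $f^U_i(a)$ is contained in the support of $h^U_i(a)$. If $i$ is best-replying at $a$, both supports equal $\{a_i\}$, by the first condition of Observation~\ref{obs:4} and the definition of $h$. If $i$ is not best-replying at $a$, then the support of $h^U_i(a)$ is all of $A_i$, so the containment is immediate. Taking products over players, every profile transition that has positive probability under $f^U$ also has positive probability under $h^U$; hence every partial $f^U$-run is a partial $h^U$-run, and consequently, for every game $(A,U)$, $f^U$-reachability implies $h^U$-reachability.

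Then I would check that $f^U$ and $h^U$ have exactly the same absorbing states for every game, namely the states $(p,\dots,p)$ with $p$ a PNE. A PNE is $h^U$-absorbing since every player is best-replying and hence repeats its action, and it is $f^U$-absorbing by the first condition of Observation~\ref{obs:4}; a non-PNE profile has some player who is not best-replying and who therefore moves with positive probability both under $h^U$ (uniform choice) and, by the second condition of Observation~\ref{obs:4}, under $f^U$, so it is absorbing for neither.

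Finally I would combine these. Fix a game $(A,U)$ in the class. If it has no PNE, then $h$ succeeds on it vacuously. Otherwise $f$ self-stabilizes on it, so from every state $x$ some $f^U$-absorbing PNE $(p,\dots,p)$ is $f^U$-reachable from $x$; by the reachability containment it is also $h^U$-reachable from $x$, and it is an $h^U$-absorbing state, so $h$ self-stabilizes on $(A,U)$. Hence $h$ succeeds on $\C(A)$. I do not anticipate a real obstacle here; the one point worth stating carefully is that self-stabilization asks only for the reachability of some absorbing PNE from each state, and nothing finer about the distribution over runs, so plain support containment between $f^U$ and $h^U$ is exactly what is needed.
\end{proofof}
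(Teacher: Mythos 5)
Your proposal is correct and matches the argument the paper intends: Observation~\ref{obs:5} is stated without proof as implicit in Hart and Mas-Colell's work, and the paper's gloss that $h$ is ``at least as powerful'' as any other historyless uncoupled mapping is precisely your support-containment argument via the two conditions of Observation~\ref{obs:4}. The one point you rightly flag --- that succeeding on $\U(A)$ implies succeeding on $\G(A)$, so the hypothesis of Observation~\ref{obs:4} holds in either case --- is handled correctly, and the rest of the plan (transitions with positive $f^U$-probability have positive $h^U$-probability, every PNE is $h^U$-absorbing, hence reachability of an absorbing PNE transfers) goes through without obstruction.
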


\subsection{Stochastic dynamics for $\U(A)$}
We now describe the profile spaces in which there are uncoupled historyless strategy mappings that succeed on every game, or equivalently (by Observation \ref{obs:5}), the $A$ for which $h$ succeeds on $\U(A)$. A proof that $h$ succeeds on 2-by-$k$ games is given in the appendix. It proceeds by simple case checking but may be a useful warmup for working with these dynamics.

\begin{thm}\label{thm:6}
For every two-player profile space $A$ in which one player has only two actions, $h$ succeeds on all games $(A,U)$.
\end{thm}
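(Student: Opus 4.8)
The plan is to first pin down the absorbing states of $h$ and then prove reachability of a pure Nash equilibrium from every profile, exploiting heavily that one player---call her player~1, with $A_1=\{0,1\}$---has only two actions. Under $h$ a profile $a$ is $h^U$-absorbing if and only if $a$ is a PNE: if $a$ is a PNE every player stays with probability $1$, and if $a$ is not a PNE some player is not best-replying and so moves with positive probability to another profile. Hence $h$ succeeds on $(A,U)$ exactly when $(A,U)$ has no PNE or every profile can reach a PNE, and it suffices to fix a PNE $p=(p_1,p_2)$ and show some PNE is $h^U$-reachable from every $(a_1,a_2)$. Throughout I write $BR_1(c)$ for player~1's set of best replies when player~2 plays $c$ (this depends only on $c$).

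The two structural facts to use, both forced by $|A_1|=2$, are: (i) at a profile $(b_1,c)$ where player~1 is not best-replying, $BR_1(c)=\{1-b_1\}$, so player~1 moves to the particular action $1-b_1$ with probability $1/2$ (and stays with probability $1/2$); and (ii) at a profile $(b_1,c)$ where player~1 is best-replying but player~2 is not, player~1 stays while player~2 plays every action with positive probability, so every profile $(b_1,c')$ in the same column is reachable in one step. Using these I would first show (Step~I) that from any $(a_1,a_2)$ one reaches either $p$ itself or a profile at which player~1 is best-replying: if player~1 is already best-replying we are done; otherwise, if player~2 is best-replying she stays while player~1 moves to $1-a_1=BR_1(a_2)$, giving a profile with player~1 best-replying, and if player~2 is not best-replying she can move to $p_2$ while player~1 moves to $p_1$, reaching $p$.

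Step~II starts from a profile $b=(b_1,b_2)$ with player~1 best-replying. If player~2 is also best-replying, $b$ is a PNE. Otherwise fact~(ii) makes the entire column $b_1$ reachable, so if column $b_1$ contains a PNE we reach it by moving player~2 to that action. If it does not, then $p$ lies in the other column, i.e.\ $p_1=1-b_1$, and I would argue that $b_1$ is not weakly dominant for player~1 (otherwise any best reply of player~2 to $b_1$ would complete a PNE in column $b_1$), so some $c$ has $BR_1(c)=\{1-b_1\}$. Reaching $(b_1,c)$ and then letting player~1 switch to $1-b_1$ lands us either on $p=(1-b_1,p_2)$ directly (when player~2 is free to move to $p_2$) or on a profile $(1-b_1,c)$ with player~1 best-replying in column $p_1$; in the latter case a short reduction finishes the argument: from any $(p_1,c)$ with $p_1\in BR_1(c)$, player~1 stays, and player~2 either already best-replies there (so it is a PNE) or can move to $p_2$, reaching $p$.

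I expect the main obstacle to be exactly the case where the current column contains no PNE, since there one must force player~1 to change columns while also steering player~2; the weak-dominance argument supplies a player-2 action against which player~1 is compelled to switch, and the remaining work is careful bookkeeping of the simultaneous moves so that player~1's forced switch and player~2's move (to $p_2$, or into column $p_1$) occur together with positive probability. Everything else is the routine case checking the surrounding text promises.
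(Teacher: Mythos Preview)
Your proposal is correct and follows essentially the same approach as the paper's proof: both exploit that $|A_1|=2$ forces the ``other'' action to be the unique best reply whenever player~1 is not best-replying, and both perform a short case analysis showing that from any profile one can either reach a PNE in the current column or hop to the other column via a profile where player~1 is forced to switch. The organization differs only superficially---the paper indexes its four cases by whether player~1 is best-replying and whether the current column $a_1$ contains a PNE, whereas you split into two steps (reduce to ``player~1 best-replying,'' then navigate to a PNE)---but your weak-dominance observation is exactly the content of the paper's Case~3, and the remaining moves line up case for case.
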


It turns out that $2$-by-$k$ profile spaces are the only ones where $h$ succeeds on all games.
\begin{thm}\label{thm:7}
Let $A$ be a profile space. Unless $A$ has only two players and one of those players has only two actions, no historyless uncoupled stationary strategy mapping succeeds on all games $(A,U)$.
\end{thm}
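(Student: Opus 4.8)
The plan is to reduce, via Observation~\ref{obs:5}, to proving that the canonical historyless uncoupled mapping $h$ fails on $\U(A)$ whenever $A$ is not of the excluded form, and then to obtain this by combining two ``downward reduction'' lemmas with a single explicit $2$-by-$2$-by-$2$ counterexample (and Theorem~\ref{thm:2} for the two-player subcase). Throughout we may assume the game under consideration has a PNE, since otherwise $h$ succeeds vacuously.

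First I would establish two monotonicity lemmas. (Players) If $h$ succeeds on $\U(A_1\times\cdots\times A_{n+1})$ then $h$ succeeds on $\U(A_1\times\cdots\times A_n)$: given a game $U'$ on $A'=A_1\times\cdots\times A_n$, extend it to $A'\times A_{n+1}$ by giving player $n+1$ a strictly dominant action $d$ and making every other player's payoff independent of player $n+1$'s choice; then the PNE of the extension are exactly those of $U'$ with $d$ appended, in any $h$-run that starts with player $n+1$ playing $d$ that player never moves, and projecting away the last coordinate turns such a run into a legal $h^{U'}$-run, so reachability of an absorbing PNE descends. (Actions) If $h$ succeeds on $\U(A_1\times\cdots\times\widehat A_i\times\cdots\times A_n)$ with $\widehat A_i\supsetneq A_i$, then $h$ succeeds on $\U(A_1\times\cdots\times A_n)$: it suffices to add one action at a time, and I would add the new action $*$ as an exact \emph{duplicate} of a fixed $\phi\in A_i$, i.e.\ set every player's payoff at any profile with $a_i=*$ equal to its payoff at the profile obtained by replacing $a_i$ with $\phi$. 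Writing $\rho$ for the projection that rewrites $*$ as $\phi$, the crucial fact is that a player is best-replying at $a$ in the enlarged game if and only if it is best-replying at $\rho(a)$ in $U'$; hence $\rho$ sends $h$-runs to legal $h^{U'}$-runs and sends PNE to PNE, so again reachability of an absorbing PNE descends. (These auxiliary games are deliberately non-generic, which is fine because the statement concerns $\U(A)$.)

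With the lemmas in place the claim splits into two cases. If $A$ has $n\ge3$ players, then applying the Actions lemma to enlarge each of three chosen action sets and the Players lemma to append the remaining players reduces everything to showing $h$ fails on $\U(2,2,2)$. If $A$ has exactly two players, neither with only two actions, the Actions lemma reduces the claim to $\U(3,3)$, which fails by Theorem~\ref{thm:2} and Observation~\ref{obs:5}. So it remains only to build a $2$-by-$2$-by-$2$ game, with a PNE, on which $h$ fails. I would take all action sets to be $\{0,1\}$, make action $0$ weakly dominant for player~$3$ with player~$3$ indifferent exactly when players~$1$ and~$2$ both play $0$, make players~$1$ and~$2$ (restricted to the face $a_3=0$) play matching pennies (player~$1$ strictly prefers to match player~$2$'s action and player~$2$ strictly prefers to mismatch player~$1$'s), so that face has no PNE, and choose the payoffs on the face $a_3=1$ so that $(0,0,1)$ is a PNE. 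Then the face $\{a:a_3=0\}$ is closed under $h$ (player~$3$ is always best-replying there, hence never moves, while players~$1$ and~$2$ are never simultaneously best-replying, hence keep moving) and contains no PNE, so no $h$-run started inside it ever reaches a PNE although $(0,0,1)$ is one; thus $h$ does not succeed on this game.

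The step I expect to be the main obstacle is the Actions lemma: the naive ``pad with a strictly dominated action'' construction does not work, because an $h$-run in the enlarged game may transiently pass through the dominated action and the coordinatewise projection of such a run need not be a legal run of $h$ on the original game. Using a duplicated action instead is what repairs this, precisely because it makes the predicate ``player~$i$ is best-replying'' invariant under $\rho$. The only other point needing care is the (routine but fiddly) verification that in the explicit $2$-by-$2$-by-$2$ game the face $a_3=0$ is simultaneously closed under $h$ and free of PNE while the game as a whole still has the PNE $(0,0,1)$.
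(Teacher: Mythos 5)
Your proposal matches the paper's proof: it reduces via Observation~\ref{obs:5} to the canonical mapping $h$, establishes the same two monotonicity lemmas (the paper's Lemmas~\ref{lem:8} and~\ref{lem:9}, likewise built from a duplicated action and from an indifferent extra player with a dominant action), and combines them with a $2$-by-$2$-by-$2$ counterexample of the same shape as the paper's Lemma~\ref{lem:10} (a PNE-free face that is closed under $h$ because one player's action is weakly dominant there) together with Theorem~\ref{thm:2} for the two-player case. The only difference is cosmetic: you justify the reduction lemmas by projecting runs of the enlarged game directly down to legal $h$-runs of the original game, whereas the paper constructs an auxiliary uncoupled strategy mapping on the smaller space and re-invokes Observation~\ref{obs:5}; both arguments are sound.
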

We give three lemmas that will be used in the proof of Theorem \ref{thm:7}. Their full proofs are in the appendix. Informally, Lemma \ref{lem:8} says that additional actions do not make a profile space any ``easier'' in this context; the players will need at least as much recall to succeed on all games in the larger space. The proof relies on a type of reduction in which the players take advantage of a strategy mapping for a larger game by ``pretending'' to play the larger game. Whenever player $i$ plays $k_i$, all players guess randomly whether $i$ would have played $k_i$ or $k_i+1$ in the larger game.

\begin{lem}\label{lem:8}
Let $n\geq 2$, $k_1,...,k_n\geq2$, and $i\in\{1,...,n\}$. If $h$ succeeds on $\U(k_1,...,k_i+1,...,k_n)$, then $h$ succeeds on $\U(k_1,...,k_i,...,k_n)$.
\end{lem}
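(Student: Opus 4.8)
The plan is to deduce the statement for the smaller profile space from its validity on the larger one by a \emph{cloning} construction, carried out directly with the canonical mapping $h$. Fix a profile space $A=A_1\times\cdots\times A_n$ of size $(k_1,\dots,k_n)$ and an arbitrary game $(A,U)$ on it; if $(A,U)$ has no PNE then $h$ succeeds on it vacuously, so assume it has one. Build $A'=A_1\times\cdots\times(A_i\cup\{\ast\})\times\cdots\times A_n$ by adjoining a fresh action $\ast$ to player $i$ (this is the ``$k_i{+}1$'' of the informal description), and let $(A',U')$ be the game in which $\ast$ is a payoff clone of $k_i$: set $u'_j(b)=u_j(b)$ for $b\in A$ and $u'_j(b_{-i},\ast)=u_j(b_{-i},k_i)$ for every player $j$. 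Let $\phi\colon A'\to A$ be the collapsing map sending $\ast$ to $k_i$ and fixing everything else. Since $h$ depends only on the best-reply structure and therefore respects relabelings of actions, the hypothesis gives that $h$ succeeds on $\U(A')$, and the goal becomes: $h$ succeeds on $\U(A)$.

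The engine of the proof is a dictionary between the $h^{U'}$-dynamics and the $h^{U}$-dynamics carried by $\phi$. I would first establish the support identity $\operatorname{supp}(h^U_j(a))=\phi_j\big(\operatorname{supp}(h^{U'}_j(\tilde a))\big)$ for every $a\in A$, every player $j$, and every lift $\tilde a\in\phi^{-1}(a)$. This reduces to two facts about the clone: (i) because $u'_j$ and $u_j$ agree up to the clone, player $j$ is $U'$-best-replying at $\tilde a$ iff $j$ is $U$-best-replying at $a$ --- for $j\neq i$ because swapping $i$'s action between $k_i$ and $\ast$ does not affect $j$'s payoffs, and for $j=i$ because the maximum of $u'_i(a_{-i},\cdot)$ over $A'_i$ equals its maximum over $A_i$ and $\tilde a_i$ is a clone-copy of $a_i$; and (ii) $h$ either freezes a best-replier's current action or randomizes with full support over its action set, while $\phi_i$ maps $A'_i$ onto $A_i$ and collapses both $\{k_i\}$ and $\{\ast\}$ to $\{k_i\}$. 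From this identity it follows that $\phi$ maps $h^{U'}$-runs to $h^{U}$-runs, that every $h^{U}$-run lifts step by step to an $h^{U'}$-run (choose a $\phi$-preimage of each coordinate at each step), and hence that $h^U$-reachability in $A$ is exactly the $\phi$-image of $h^{U'}$-reachability in $A'$. The same best-reply equivalence shows $\phi$ maps PNE of $U'$ onto PNE of $U$ (and that $(A,U)$ has a PNE iff $(A',U')$ does), and --- using that a state is $h^U$-absorbing precisely when every player is best-replying there, i.e.\ precisely when it is a PNE, an easy consequence of the definition of $h$ --- that $\phi$ maps $h^{U'}$-absorbing states to $h^U$-absorbing states.

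Assembling the pieces: given any $x\in A$, pick a lift $\tilde x\in\phi^{-1}(x)$. Since $(A',U')$ has a PNE and $h$ succeeds on $\U(A')$, from $\tilde x$ some $h^{U'}$-absorbing PNE $\tilde p$ is $h^{U'}$-reachable; then $\phi(\tilde p)$ is an $h^U$-absorbing PNE of $U$ that is $h^U$-reachable from $x$ (push the witnessing partial run through $\phi$). As $x$ was arbitrary, $h$ self-stabilizes on $(A,U)$; as $(A,U)$ was an arbitrary game on $A$, $h$ succeeds on $\U(A)$, i.e.\ on $\U(k_1,\dots,k_i,\dots,k_n)$.

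I expect the main obstacle to be the clean verification of the support identity --- and in particular of the best-reply equivalence --- in the case $j=i$: one must keep straight which of $k_i$ and $\ast$ is which under $\phi_i$, confirm the equivalence for \emph{both} lifts of a profile in which $i$ plays $k_i$, and observe that when $i$ is best-replying there $h^{U'}_i$ freezes $\tilde a_i$ (either $k_i$ or $\ast$) while $\phi$ collapses both freezes to the single smaller-game freeze at $k_i$. This is exactly the ``guess whether $i$ would have played $k_i$ or $k_i{+}1$'' step of the informal sketch: because $h$ treats an action and its clone identically, the guess never affects the induced play, which is precisely what lets $h$ on the smaller space --- rather than some coordinated coin-flipping simulation --- inherit self-stabilization directly from $h$ on the larger one.
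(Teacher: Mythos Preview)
Your argument is correct. Both you and the paper start from the same cloning construction $U'$ on $A'$, but the paper then takes an indirect route: it builds an auxiliary historyless uncoupled mapping $f$ on $A$ that simulates $h^{U'}$ by having every player flip an independent coin whenever $i$ plays $k_i$ (guessing whether to interpret it as $k_i$ or $k_i{+}1$), verifies that $f^U$-runs shadow $h^{U'}$-runs well enough for $f$ to succeed on $\U(A)$, and only then invokes Observation~\ref{obs:5} to transfer success from $f$ back to $h$. Your route bypasses both the auxiliary $f$ and Observation~\ref{obs:5}: the support identity $\operatorname{supp}(h^U_j(a))=\phi_j\bigl(\operatorname{supp}(h^{U'}_j(\tilde a))\bigr)$, which follows immediately from the best-reply equivalence between $U$ and $U'$ together with the full-support/freeze dichotomy in the definition of $h$, shows directly that $\phi$ pushes $h^{U'}$-runs forward to $h^U$-runs and hence transfers reachability of an absorbing PNE. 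Your approach is cleaner here precisely because $h$ is the canonical mapping and treats an action and its clone identically; the paper's simulation-plus-Observation~\ref{obs:5} packaging is more general-purpose (it would work for any successful historyless uncoupled mapping on $A'$, not just $h$), but that extra generality is not needed for the lemma as stated.
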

Lemma \ref{lem:9} tells us that the same is true of adding players to the game. Its proof also uses a simple reduction. The players utilize the strategy mapping for the $(n+1)$-player game by behaving as if there is an additional player who never wishes to move. This preserves genericity, so the lemma also applies to the class of generic games.

\begin{lem}\label{lem:9}
Let $n\geq 2$ and $k_1,...,k_n,k_{n+1}\geq2$. If $h$ succeeds on $\U(k_1,...,k_n,k_{n+1})$, then $h$ succeeds on $\U(k_1,...,k_i,...,k_n)$. The same is true if we replace $\U$ with $\G$.
\end{lem}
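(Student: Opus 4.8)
The plan is to reduce the $n$-player problem to the $(n+1)$-player problem by appending a ``dummy'' player whose best reply is constant. Given an $n$-player game $(A,U)$ with $A=A_1\times\cdots\times A_n$, I would fix an action set $A_{n+1}$ with $|A_{n+1}|=k_{n+1}$ and a distinguished element $a^*_{n+1}\in A_{n+1}$, and set $A'=A\times A_{n+1}$. Define $U'=(u'_1,\dots,u'_{n+1})\in\U(A')$ by $u'_i(a_1,\dots,a_n,a_{n+1})=u_i(a_1,\dots,a_n)$ for $i\le n$, and let $u'_{n+1}$ be any utility function for which $a^*_{n+1}$ is strictly dominant (e.g.\ one depending only on $a_{n+1}$ and uniquely maximized there). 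Since $u'_i$ ignores the last coordinate for $i\le n$, and since $BR^{U'}_{n+1}(\cdot)\equiv\{a^*_{n+1}\}$, we get $BR^{U'}_i(a,a_{n+1})=BR^U_i(a)$ for every $i\le n$. Consequently $U'$ is generic whenever $U$ is, and the PNE of $U'$ are exactly the profiles $(p,a^*_{n+1})$ with $p$ a PNE of $U$; in particular $U'$ has a PNE iff $U$ does.

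Next I would observe that the $h^{U'}$-dynamics on $A'$, restricted to the ``slice'' $A\times\{a^*_{n+1}\}$, is a faithful copy of the $h^U$-dynamics on $A$. Because $a^*_{n+1}$ is always a best reply for player $n+1$, the definition of $h$ forces player $n+1$ to replay $a^*_{n+1}$ with probability $1$ from any state whose profile has last coordinate $a^*_{n+1}$; hence every $h^{U'}$-run started in the slice stays in the slice. On the slice, since $BR^{U'}_i(a,a^*_{n+1})=BR^U_i(a)$ for $i\le n$, the transition distribution of each player $i\le n$ under $h^{U'}$ coincides with that of $h^U$. Thus the map $a\mapsto(a,a^*_{n+1})$ is an isomorphism from the $h^U$-transition structure on $A$ onto the $h^{U'}$-transition structure on the (closed) slice $A\times\{a^*_{n+1}\}$, carrying runs to runs, reachability to reachability, absorbing states to absorbing states, and PNE of $U$ to PNE of $U'$.

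With this in hand, assume $h$ succeeds on $\U(k_1,\dots,k_n,k_{n+1})$ and let $(A,U)$ be an $n$-player game with a PNE (the PNE-free case is vacuous). Then $(A',U')$ also has a PNE, so $h$ self-stabilizes on $(A',U')$: from the state $(a,a^*_{n+1})$, for any $a\in A$, some $h^{U'}$-absorbing PNE is $h^{U'}$-reachable. By the previous paragraph the witnessing run stays in the slice, and the absorbing PNE it reaches has the form $(p,a^*_{n+1})$ where $p$ is an $h^U$-absorbing PNE of $U$; pulling the run back to $A$ shows $p$ is $h^U$-reachable from $a$. As $a$ was arbitrary, $h$ self-stabilizes on $(A,U)$, so $h$ succeeds on $\U(k_1,\dots,k_n)$. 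The same argument with $\G$ in place of $\U$ goes through verbatim, using that $U'$ is generic when $U$ is.

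The reduction is clean, so there is no single hard step; the place requiring care is the bookkeeping of the second paragraph — verifying that the slice is genuinely closed under $h^{U'}$ (which rests on the ``a best-replying player never moves'' clause of $h$) and that runs, reachability, and absorbing PNE transfer in both directions across the isomorphism, so that a self-stabilization certificate for $(A',U')$ projects down to one for $(A,U)$.
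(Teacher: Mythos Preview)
Your proposal is correct and follows essentially the same approach as the paper: append a dummy $(n{+}1)$-st player with a strictly dominant action, observe that the slice $A\times\{a^*_{n+1}\}$ is $h^{U'}$-invariant, and identify the $h^{U'}$-dynamics on that slice with the $h^U$-dynamics on $A$. The only cosmetic difference is that the paper packages the projected dynamics as an auxiliary strategy mapping $f$ and then invokes Observation~\ref{obs:5} to pass from $f$ back to $h$, whereas you note directly that the projected dynamics already \emph{are} $h^U$ (since $BR^{U'}_i(a,a^*_{n+1})=BR^U_i(a)$ and the action sets coincide), which is slightly cleaner.
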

Finally, Lemma \ref{lem:10} says that $h$ does not succeed on all $2$-by-$2$-by-$2$ games. An example is given in its proof of a game with a PNE where $h$ fails to converge.

\begin{lem}\label{lem:10}
No historyless uncoupled stationary strategy mapping succeeds on $\U(2,2,2)$.
\end{lem}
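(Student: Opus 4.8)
The plan is to reduce, via Observation~\ref{obs:5}, to showing that the canonical mapping $h$ itself fails: it suffices to exhibit one $2$-by-$2$-by-$2$ game that has a PNE but on which $h$ does not self-stabilize. (The same example also works directly from Observation~\ref{obs:4}: in the two-action case a non-best-replying player must flip with positive probability, so for any successful $f$ the one-step reachable set from a profile $a$ is contained in the subcube obtained by letting exactly the players who are \emph{not} best-replying at $a$ vary their actions.) To make $h$ fail, I will build a game in which some nonempty set $S$ of profiles, none of them a PNE, is closed under the dynamics --- meaning that from each $a\in S$ the subcube just described lies inside $S$. Since a non-best-replying player moves with positive probability, no profile in $S$ is absorbing, so a run started in $S$ never converges and never reaches the (PNE, hence absorbing) equilibrium that lies outside $S$.

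For the construction, identify $A$ with the Boolean cube $\{0,1\}^3$ and make player~$3$ a \emph{dummy}: $u_3\equiv 0$, so player $3$ is best-replying at every profile and under $h$ never changes its action. On the face $a_3=0$ let players $1$ and $2$ play a matching-pennies-type game --- player $1$ wants $a_1=a_2$, player $2$ wants $a_1\neq a_2$ --- which has no PNE and whose best-reply dynamics run around the four profiles of that face, with player $3$ passive and, at each of those profiles, exactly one of players $1,2$ wishing to move. Hence the set $S=\{a:a_3=0\}$ (four profiles) is closed under the dynamics and contains no PNE. On the face $a_3=1$, in contrast, set $u_i(a)=a_i$ for $i=1,2$, so that on that face each of players $1,2$ prefers action $1$; then players $1$, $2$, and the dummy are all best-replying at $(1,1,1)$, which is therefore a PNE of the whole game. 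So the game has a PNE.

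It then follows that $h$ started from any profile in $S$ --- say $(0,0,0)$ --- stays in $S$ forever, never reaching an absorbing state, let alone the PNE $(1,1,1)$. Thus $h$ does not self-stabilize on a game that has a PNE, so $h$ does not succeed on $\U(2,2,2)$, and by Observation~\ref{obs:5} no historyless uncoupled stationary strategy mapping succeeds on $\U(2,2,2)$.

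I do not expect a genuine obstacle: once the right example is chosen, the verification is a handful of pairwise utility comparisons. The only points to watch are (i) the trapped set must avoid the PNE, which forces the equilibrium onto the $a_3=1$ face and forces player $3$ to be indifferent rather than to strictly prefer $a_3=0$; and (ii) such a game is necessarily non-generic, which is consistent with Theorem~\ref{thm:11}, where $h$ is shown to handle all three-player \emph{generic} games with a two-action player. (An equivalent non-dummy example is a ``hexagonal'' six-cycle through the weight-one and weight-two vertices of the cube, with the PNE at an end of the remaining main diagonal; the dummy-player version is simply the most transparent instance.)
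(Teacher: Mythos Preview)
Your proposal is correct and follows essentially the same approach as the paper: exhibit a $2{\times}2{\times}2$ game with a PNE in which player~$3$ is always best-replying on one face of the cube (hence never moves under $h$, by Observation~\ref{obs:4}), while that face is a matching-pennies game between players~$1$ and~$2$ and so contains no PNE. The paper's concrete example differs only cosmetically---there player~$3$'s ``trapped'' action is weakly dominant rather than player~$3$ being a dummy, but on the closed face the game between players~$1$ and~$2$ is literally matching pennies, just as in your construction.
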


\begin{proofof}{Theorem \ref{thm:7}}
Let $A=A_1\times...\times A_n$. By Observation \ref{obs:5}, it suffices to show that $h$ does not succeed on $\U(|A_1|,...,|A_n|)$. Assume that $h$ does succeed on $\U(|A_1|,...,|A_n|)$. If $n=2$, $|A_1|,|A_2|>2$, and $h$ succeeds on $\U(k_1,k_2)$, then by repeatedly applying Lemma \ref{lem:8}, $h$ succeeds on $\U(3,3)$. This contradicts Theorem \ref{thm:2}. Now suppose that $n\geq 3$. If $h$ succeeds on $\U(|A_1|,...,|A_n|)$, then by repeatedly applying Lemma \ref{lem:9}, $h$ succeeds on $\U(|A_1|,|A_2|,|A_3|)$. So by repeatedly applying Lemma \ref{lem:8}, $h$ succeeds on $\U(2,2,2)$. This contradicts Lemma \ref{lem:10}.
\end{proofof}

\subsection{Stochastic dynamics for $\G(A)$}
We now turn to generic games and to describing the class of profile spaces $A$ for which there exist historyless uncoupled strategy mappings that succeed on $\G(A)$. Theorem \ref{thm:3} tells us that $h$ succeeds on two-player generic games. In fact, $h$ also succeeds on three-player generic games where one player has only two options.

\begin{thm}\label{thm:11}
Let $A$ be a three-player profile space such that one player has only two actions. Then $h$ succeeds on all generic games $(A,U)$.
\end{thm}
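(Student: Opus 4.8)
The plan is to show that the canonical strategy mapping $h$ succeeds on every generic three-player game $(A,U)$ in which some player—say player $3$ by relabeling—has only two actions, $|A_3|=2$. By Observation \ref{obs:5} it suffices to analyze $h$ itself. I would first reduce to the case of a game with a PNE $p=(p_1,p_2,p_3)$, since otherwise success is automatic. Because $h$ freezes any best-replying player and every non-best-replying player moves with positive probability, $p$ is an absorbing state and no non-PNE state is absorbing; the entire content is therefore to show that from every state $a\in A$, the PNE $p$ is $h$-reachable.

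The key structural observation I would exploit is that player $3$ has only two actions, so at any profile player $3$ is \emph{always} best-replying to at least one of them, and in a generic game the best reply is a well-defined function $b_3(a_1,a_2)\in A_3$ of the other two players' actions. I would first argue that we can steer the play into a profile where player $3$ plays $p_3$ and moreover $p_3 = b_3(a_1,a_2)$, i.e. player $3$ is content: from any state, let players $1$ and $2$ (if not best-replying) randomize, and with positive probability they land on any chosen pair $(a_1,a_2)$; picking $(a_1,a_2)$ with $b_3(a_1,a_2)=p_3$ (such a pair exists, e.g. $(p_1,p_2)$), player $3$ will then move to or stay at $p_3$. Once player $3$ is at $p_3$ and best-replying, the remaining task is to drive the two-player ``slice'' game $U(\cdot,\cdot,p_3)$ to its equilibrium $(p_1,p_2)$ while keeping player $3$ best-replying at $p_3$ throughout.

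The main obstacle, and where the real work lies, is that as players $1$ and $2$ explore their slice to reach $(p_1,p_2)$, they may pass through pairs $(a_1,a_2)$ with $b_3(a_1,a_2)\neq p_3$, at which point player $3$ becomes unhappy and may jump, destroying our progress and coupling the three-player dynamics in a way that is not captured by Theorem \ref{thm:3} alone. To handle this I would do a careful case analysis, mirroring the appendix proof that $h$ succeeds on $2$-by-$k$ games but now with the ambient two-dimensional slice: I would track the partition of the $A_1\times A_2$ grid into the region where $b_3=p_3$ and its complement, use the fact that whenever player $3$ does move it lands on a best reply and freezes there, and show that within finitely many corrections we reach a state in which all three players are simultaneously best-replying. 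A useful sub-step is to note that if ever players $1$ and $2$ reach $(p_1,p_2)$ while player $3$ is at $p_3$, we are done, and that if player $3$ is at $p_3$ and best-replying, players $1$ and $2$ face the fixed generic two-player game $U(\cdot,\cdot,p_3)$ to which, by the argument behind Theorem \ref{thm:3}, $h$ drives them to the equilibrium $(p_1,p_2)$—so the only thing to rule out is an infinite sequence of player-$3$ interruptions, which I would preclude by a potential/counting argument on how often player $3$'s forced moves can recur before $(a_1,a_2)$ stabilizes in the $b_3=p_3$ region.

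Finally I would assemble these pieces: from an arbitrary state, reach the $b_3=p_3$ region with player $3$ at $p_3$ (positive probability); thereafter argue that with positive probability players $1$ and $2$ proceed to $(p_1,p_2)$ without provoking player $3$, or if they do, that each provocation strictly advances us toward termination, so that after finitely many steps $(p_1,p_2,p_3)$ is reached and is absorbing. Since each transition used has positive probability and the number of steps is finite, $p$ is $h$-reachable from every state, giving self-stabilization and hence success on $\G(A)$.
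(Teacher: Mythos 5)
Your high-level strategy differs from the paper's (which merges players~1 and~2 into a single ``super-player'' and reduces to a $2$-by-$k\ell$ game handled by Theorem~\ref{thm:6}), and unfortunately the two places where you wave your hands are exactly the places where that reduction is doing real work. First, the step ``let players 1 and 2 randomize, and with positive probability they land on any chosen pair $(a_1,a_2)$'' is false under $h$: a player that is currently best-replying is frozen with probability~1, so players 1 and 2 cannot be steered to an arbitrary pair such as $(p_1,p_2)$. Controlling where the two ``free'' players can go is a genuine lemma (Lemma~\ref{lem:12} in the paper: in a generic two-player game with \emph{no strictly dominant actions}, every state is reachable from every non-PNE state), and it fails without the no-dominant-action hypothesis, which is why the paper needs a separate case analysis for slices with a strictly dominant action. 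Your sketch never mentions dominant actions and never justifies this reachability claim.

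Second, and more fundamentally, your termination argument assumes that once player~3 sits contentedly at $p_3$, the slice dynamics on $U(\cdot,\cdot,p_3)$ will carry players 1 and 2 to $(p_1,p_2)$. But Theorem~\ref{thm:3} only guarantees convergence to \emph{some} PNE of the slice, and the slice may have several; the dynamics can get absorbed (within the slice) at a pair $(q_1,q_2)$ that is a PNE of $U^{p_3}$ but at which player~3 is \emph{not} best-replying in $U$. At that point players 1 and 2 are frozen, player~3 jumps to the other slice, and the process repeats there --- possibly bouncing between the two slices forever. Ruling out this cycling is the crux of the theorem, and your proposed ``potential/counting argument on player-3 interruptions'' does not address it, because the obstruction is not the number of interruptions during exploration but the fact that the slice dynamics need not target the right equilibrium. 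The paper resolves this by encoding ``is the current pair a PNE of the current slice?'' as the utility of a merged player in an auxiliary $2$-by-$k\ell$ game $\wh{U}$, so that the PNE of $\wh{U}$ are exactly the PNE of $U$ and Theorem~\ref{thm:6} supplies the global convergence; some such mechanism is missing from your argument.
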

The proof of this theorem relies partially on an analogy between a $k$-by-$\ell$-by-2 generic game and a $k\ell$-by-2 game that might not be generic. This requires the following technical lemma showing that under $h$, two players in a generic game sometimes behave similarly to a single player.

\begin{lem}\label{lem:12}
Let $k,l\in\N$, and let $U\in\G(k,\ell)$ be a game in which neither player has a strictly dominant action. For every $a,b\in A$ such that $a$ is not a PNE for $U$, $b$ is $h^U$-reachable from $a$.
\end{lem}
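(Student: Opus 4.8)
The plan is to read off the one-step transition structure of $h^U$ from the two best-reply functions and then run a short graph-search argument. Since $U$ is generic, write $\beta_1\colon A_2\to A_1$ and $\beta_2\colon A_1\to A_2$ for the unique best replies, so $(s_1,s_2)$ is a PNE iff $s_1=\beta_1(s_2)$ and $s_2=\beta_2(s_1)$. Call a player \emph{happy} at a profile if it is best-replying there, and call the profile \emph{doubly unhappy} if neither player is happy. Under $h^U$ a happy player repeats its action with probability $1$ while an unhappy player plays each of its actions with positive probability; hence (i) from a doubly-unhappy profile \emph{every} profile is reachable in one step, and (ii) from a profile where exactly player $1$ (resp.\ player $2$) is happy, the profiles reachable in one step are exactly those obtained by changing the other player's action arbitrarily. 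The assumption that neither player has a strictly dominant action says precisely that $\beta_1$ and $\beta_2$ are both non-constant, and this is the only place it enters.

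Now fix a non-PNE profile $a$. If $a$ is doubly unhappy we are done by (i). Otherwise exactly one player is happy at $a$; by the symmetry of $h^U$ in the two players (which also interchanges $k=|A_1|$ and $\ell=|A_2|$) assume player $1$ is unhappy and player $2$ happy, so $a=(a_1,a_2)$ with $a_2=\beta_2(a_1)$ and $a_1\neq\beta_1(a_2)$. The idea is to walk, always moving the unique unhappy player, toward a doubly-unhappy profile, at which point (i) and transitivity of reachability finish the proof. By (ii), player $1$ can move to any $(c_1,a_2)$: either some such profile is doubly unhappy, or --- since $\beta_2$ is non-constant --- $\beta_2$ must send every action other than $\beta_1(a_2)$ to $a_2$, so $d_2:=\beta_2(\beta_1(a_2))\neq a_2$, and $(\beta_1(a_2),a_2)$, reachable from $a$, has player $1$ happy and player $2$ unhappy. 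Continuing --- now player $2$ moves, then player $1$, then player $2$ --- and recording the best-reply equation forced at each step, one finds that never hitting a doubly-unhappy profile forces $g_1:=\beta_1(d_2)\neq\beta_1(a_2)$, $\beta_1(e_2)=\beta_1(a_2)$ for every $e_2\neq d_2$, $\beta_2(g_1)=a_2$, and the (eventually periodic) itinerary $(\beta_1(a_2),a_2)\to(\beta_1(a_2),d_2)\to(g_1,d_2)\to(g_1,a_2)\to(\beta_1(a_2),a_2)$.

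These accumulated facts yield a dichotomy. If $k\geq3$, then $(c_1,d_2)$ for any $c_1\notin\{\beta_1(a_2),g_1\}$ is doubly unhappy (player $1$ since $\beta_1(d_2)=g_1\neq c_1$; player $2$ since $\beta_2(c_1)=a_2\neq d_2$) and is reachable --- from $(\beta_1(a_2),d_2)$ by a move of player $1$ --- so we are done; symmetrically, if $\ell\geq3$, then $(g_1,e_2')$ for any $e_2'\notin\{a_2,d_2\}$ is doubly unhappy and reachable. In the only remaining case, $k=\ell=2$, the four profiles on the itinerary exhaust $A$, the recorded equations make $\beta_1$ and $\beta_2$ bijections with $\beta_2\circ\beta_1$ fixed-point-free --- so $U$ has no PNE --- and the itinerary is a directed $4$-cycle through every profile (with $a$ itself lying on it). Hence $b$ is reachable from $a$ in all cases, which proves the lemma.

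I expect the main obstacle to be exactly the iterated case analysis of the second and third paragraphs: one must check that every ``forced'' transition genuinely has positive probability under $h^U$, keep the accumulating best-reply equations straight, and verify that they are simultaneously strong enough both to exhibit a \emph{reachable} doubly-unhappy profile whenever $k\geq3$ or $\ell\geq3$ and to pin the game down completely in the $2\times2$ base case so that the $4$-cycle claim is literal. A secondary point to state cleanly at the outset is the player-swap symmetry, since the two players have different action-set sizes and the ``by symmetry'' reduction must also interchange $k$ and $\ell$.
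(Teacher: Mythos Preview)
Your approach is correct and genuinely different from the paper's. The paper disposes of $k=\ell=2$ by separate ``routine inspection'' and then, assuming $\ell>2$, uses a pigeonhole count: since $U$ is generic, exactly $\ell$ profiles have player~1 best-replying and exactly $k$ have player~2 best-replying, so at most $k+\ell$ profiles fail to be doubly-unhappy. It then exhibits three reachable lines --- a row $(\alpha,1),\ldots,(\alpha,\ell)$, a column $(1,\gamma),\ldots,(k,\gamma)$, and a second row $(\delta,1),\ldots,(\delta,\ell)$ --- whose union has at least $k+2\ell-2>k+\ell$ elements, forcing one of them to be doubly-unhappy. Your route instead walks alternately along best replies, accumulating equations on $\beta_1,\beta_2$ whenever a doubly-unhappy profile fails to appear, and shows these equations become inconsistent as soon as $k\geq3$ or $\ell\geq3$; the $2\times2$ case then falls out of the same analysis as a matching-pennies game with a Hamiltonian $4$-cycle. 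The paper's counting argument is shorter and avoids tracking constraints step by step, but yours integrates the $2\times2$ base case into the main argument rather than offloading it to a side inspection. One small clean-up: your ``symmetrically, if $\ell\geq3$'' is really a fourth iteration of the walk (as your parenthetical indicates), not a formal symmetry appeal; say so explicitly, since your initial WLOG already fixed which player is happy and thereby broke the $k\leftrightarrow\ell$ symmetry.
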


\begin{proofof}{Theorem \ref{thm:11}}
Let $A=\{1,...,k\}\times\{1,...,\ell\}\times\{0,1\}$ for some $\ell,k\in\N$.
Let $U\in\G(A)$ and $a=(a_1,a_2,a_3)\in A$. All PNE are absorbing states under $h$, so it will suffice to show there is some PNE that is $h^U$-reachable from $a$.

Let $A^\prime=\{1,...,k\}\times\{1,...,\ell\}$, and consider the games $U^0=(u^0_1,u^0_2)$ and $U^1=(u^1_1,u^1_2)\in\G(A^\prime)$ defined by
\begin{align*}
u^0_i(x_1,x_2)&=u_i(x_1,x_2,0)\\
u^1_i(x_1,x_2)&=u_i(x_1,x_2,1)
\end{align*}
for every $x_1\in\{1,...,k\}$, $x_2\in\{1,...,\ell\}$, and $i\in\{0,1\}$. In this proof we will repeatedly use the fact that over any finite number of steps, w.p.p. player 3 doesn't move, so if $(y_1,y_2)\in A^\prime$ is $h^{U^0}$-reachable from $(x_1,x_2)\in A^\prime$, then $(y_1,y_2,0)\in A$ is $h^U$-reachable from $(x_1,x_2,0)\in A$, and similarly for $h^{U^1}$.

\begin{claim}
If either player has a strictly dominant action in $U^0$ or $U^1$, then some PNE is $h^U$-reachable from $a$.
\end{claim}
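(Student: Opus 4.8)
The plan is as follows.

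\emph{Setup.} Up to relabeling the two actions of player~$3$ (which swaps $U^0$ and $U^1$) and relabeling players~$1$ and~$2$, I may assume player~$1$ has a strictly dominant action $d_1$ in $U^0$. Write $b_2$ for player~$2$'s best reply in $U^0$ whenever player~$1$ plays $d_1$; by genericity this is a single action, independent of player~$2$'s current action, and $(d_1,b_2)$ is the unique PNE of $U^0$. For $x\in A'$ let $\beta(x)\in\{0,1\}$ be player~$3$'s unique best reply to $(x,\cdot)$ in $U$. I also recall that in the proof of Theorem~\ref{thm:11} we may assume $(A,U)$ has a PNE (otherwise $h$ succeeds on it vacuously), so the equilibrium we need to reach does exist.

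\emph{Reachability engine.} I will use two consequences of the layerwise observation already recorded: (i) while player~$3$'s action is fixed at $c$, players~$1$ and~$2$ behave exactly as under $h^{U^c}$, and over any finite horizon player~$3$ replays $c$ with positive probability, so $h^{U^c}$-reachability within $A'$ lifts to $h^U$-reachability within $A'\times\{c\}$; and (ii) player~$3$ leaves its current layer $c$ with positive probability precisely when the run visits a profile $x$ with $\beta(x)\ne c$. From (i) and the dominance of $d_1$ in $U^0$, from \emph{every} state in layer~$0$ the run reaches $(d_1,b_2,0)$ with positive probability: steer player~$1$ to $d_1$, after which it is locked there, then steer player~$2$ to $b_2$. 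From (i) together with Lemma~\ref{lem:12} applied to $U^1$, if $U^1$ has no strictly dominant action then from any layer-$1$ state whose profile is not a PNE of $U^1$ the run reaches any chosen profile of layer~$1$ with positive probability; if instead $U^1$ does have a strictly dominant action, then $U^1$ has a unique PNE, and the same kind of steering used for layer~$0$ drives layer~$1$ to it.

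\emph{Case split on $\beta(d_1,b_2)$.} If $\beta(d_1,b_2)=0$, then $(d_1,b_2,0)$ is a PNE, hence an absorbing state, of $U$, and I will show it is reachable from every $a$: if $a$ is in layer~$0$ we are done by the engine; if $a$ is in layer~$1$, then either $a$ is already a PNE of $U$, or player~$3$ can be made non-best-replying (directly, or after first steering players~$1,2$ inside layer~$1$ to a profile with $\beta=0$, such as $(d_1,b_2)$ when that is not itself a PNE of $U^1$) and drops to layer~$0$, where the engine finishes. If $\beta(d_1,b_2)=1$, then $(d_1,b_2,0)$ is not a PNE, and since $(d_1,b_2)$ is the only PNE of $U^0$ there is no PNE of $U$ in layer~$0$; as $U$ has a PNE, it has one of the form $(p^*,1)$ with $p^*$ a PNE of $U^1$ and $\beta(p^*)=1$. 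From every $a$ I will route into layer~$1$ (if $a$ is in layer~$0$, go to $(d_1,b_2,0)$ and let player~$3$ flip up, since here $\beta(d_1,b_2)=1$) and then steer inside layer~$1$ to $p^*$ --- via Lemma~\ref{lem:12} when $U^1$ has no strictly dominant action, and via the layer-$0$-style steering to the unique PNE of $U^1$ otherwise --- landing at the absorbing PNE $(p^*,1)$ of $U$.

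\emph{Main obstacle.} The single-layer steering computations are routine bookkeeping with the definition of $h$. The delicate point is ruling out the run being trapped in one layer without ever hitting a PNE of $U$: one must check that whenever the run sits at a PNE of $U^c$ inside layer~$c$ that is not a PNE of $U$, player~$3$ is not best-replying there and so leaves layer~$c$ with positive probability, and that in the other layer the relevant target really is reachable. This is exactly where Lemma~\ref{lem:12} (or, when $U^1$ has its own strictly dominant action, the fact that $U^1$ then has a single PNE) and the standing assumption that $U$ has a PNE come in, so assembling these pieces without leaving a gap is the crux of the argument.
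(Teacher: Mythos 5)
Your proposal is correct and follows essentially the same route as the paper's proof: steer layer~$0$ to the unique PNE $(d_1,b_2)$ of $U^0$ using the strictly dominant action, use Lemma~\ref{lem:12} (or dominant-action steering) inside layer~$1$, and switch layers via player~$3$'s non-best-replies, with the key check that a PNE of $U^c$ that is not a PNE of $U$ forces player~$3$ to leave layer~$c$. The only difference is cosmetic bookkeeping: you split on $\beta(d_1,b_2)$ where the paper enumerates five cases over (location of the PNE of $U$, current layer, player~$3$'s best reply), and both decompositions cover the same ground.
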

This claim is proved in the appendix. Thus we may assume that neither player has a strictly dominant action in $U^0$ or in $U^1$. Consider a two-player game $\wh{U}=(\wh{u}_1,\wh{u}_2)$ on $\wh{A}=(\{1,...,k\}\times\{1,...,\ell\})\times\{0,1\}$ given by
\begin{align*}
\wh{u}_1(x)&=\left\{\begin{array}{ll}
1\;&\;\mbox{if }(x_1,x_2)\mbox{ is a PNE for }U^{x_3}\\
0\;&\;\mbox{otherwise}
\end{array}\right.\\
\wh{u}_2(x)&=u_3((x_1,x_2,x_3)),
\end{align*}
for every $x=((x_1,x_2),x_3)\in\wh{A}$. Note that unlike $U$, this game is not necessarily generic. By Theorem \ref{thm:6}, some PNE $\wh{p}=((p_1,p_2),p_3)$ for $\wh{U}$ is $h^{\wh{U}}$-reachable from $\wh{a}=((a_1,a_2),a_3)$.

Now let $\wh{x}=((x_1,x_2),x_3)$ and $\wh{y}=((y_1,y_2),y_3)\in\wh{A}$ such that w.p.p. $\wh{y}=h^{\wh{U}}(\wh{x})$. If $x_3\neq y_3$, then $x_3\not\in BR^{\wh{U}}_2(\wh{x})$, so $x_3\neq BR^U_3(x)$. Thus w.p.p. $h^U(x)=(x_1,x_2,y_3)$. Since $BR^U_3(x)\neq x_3 \neq y_3$ and $|A_3|=2$, we must have $BR^U_3(x)=y_3$, so if $(x_1,x_2)$ is a PNE for $U^{y_3}$, then $(x_1,x_2,y_3)$ is a PNE for $U$. Otherwise, by Lemma \ref{lem:12} $(y_1,y_2)$ is $h^{U^{x_3}}$-reachable from $(x_1,x_2)$, so $y=(y_1,y_2,y_3)$ is $h^U$-reachable from $(x_1,x_2,y_3)$ and therefore from $x$.

Applying this to the each step on the path by which $\wh{p}$ is $h^{\wh{U}}$-reachable from $\wh{a}$, we see that either $p=(p_1,p_2,p_3)$  (which is a PNE for $U$) is $h^U$-reachable from $a$, or some other PNE for $U$ is encountered in this process and thus $h^U$-reachable from $a$.
\end{proofof}

In fact, two-player and $2$-by-$k$-by-$\ell$ are the only sizes of generic games on which $h$ always succeeds.

\begin{thm}\label{thm:13}
Let $A$ be a profile space. If $A$ has more than three players, or if every player has more than two actions, then no historyless uncoupled stationary strategy mapping succeeds on all generic games $(A,U)$.
\end{thm}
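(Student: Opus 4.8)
### Proof proposal for Theorem~\ref{thm:13}

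\textbf{The plan is to reduce to the two known negative results, Theorem~\ref{thm:2}.} By Observation~\ref{obs:5}, it suffices to show that the canonical mapping $h$ fails on $\G(|A_1|,\ldots,|A_n|)$ under the stated hypotheses. There are two cases to handle. First, suppose $A$ has at least four players. Then by repeatedly applying the generic version of Lemma~\ref{lem:9}, if $h$ succeeded on $\G(|A_1|,\ldots,|A_n|)$ it would succeed on $\G(|A_1|,|A_2|,|A_3|)$, and then applying it once more it would succeed on some three-player size --- wait, that is not yet a contradiction since three-player generic games with a two-action player \emph{are} handled by $h$. So the four-player case must instead be pushed down differently: I would apply Lemma~\ref{lem:9} repeatedly to get down to exactly four players, and then I need a four-player negative result. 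The cleanest route is to observe that a $3$-by-$3$-by-$3$-by-$3$ generic game can encode a $3$-by-$3$-by-$3$ generic game (again via a dummy fourth player who never wishes to move, as in Lemma~\ref{lem:9}'s reduction), so $h$ succeeding on $\G(2,2,2,2)$ — after first shrinking action sets via a generic analogue of Lemma~\ref{lem:8} — would contradict the second half of Theorem~\ref{thm:2}. Hence I will want a \emph{generic} version of Lemma~\ref{lem:8} (shrinking action sets preserves the difficulty even within generic games); this needs to be checked, since the reduction in Lemma~\ref{lem:8} involved players guessing between $k_i$ and $k_i+1$, which could break uniqueness of best replies, so some care is required here.

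\textbf{Second, suppose every player has at least three actions.} If $n\ge 3$, apply the generic Lemma~\ref{lem:9} repeatedly to reduce to $n=3$, giving a size $(k_1,k_2,k_3)$ with every $k_i\ge 3$; then apply the generic analogue of Lemma~\ref{lem:8} to reduce to $\G(3,3,3)$, contradicting the second half of Theorem~\ref{thm:2}. If $n=2$, we have a size $(k_1,k_2)$ with $k_1,k_2\ge 3$; but Theorem~\ref{thm:3} says $h$ \emph{does} succeed on all two-player generic games, so this subcase is vacuous — there is nothing to prove when $n=2$ and both players have at least three actions, because the hypothesis "$A$ has more than three players, or every player has more than two actions" combined with $n=2$ would require every player to have more than two actions, yet that case is consistent with Theorem~\ref{thm:3}. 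I need to recheck the theorem statement: it claims a negative result in that case, which would contradict Theorem~\ref{thm:3}. So the correct reading must be that the hypothesis implicitly requires $n\ge 3$ whenever we invoke "more than three players," and the "every player has more than two actions" clause is only meant to bite for $n\ge 3$; I will state the proof so that the $n=2$ situation is simply excluded by the hypotheses as intended (the complement of the positive cases: two players, or three players with a two-action player).

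\textbf{The main obstacle} I anticipate is establishing the generic analogue of Lemma~\ref{lem:8}: shrinking an action set from $k_i+1$ to $k_i$ while staying inside the class of generic games and while the reduction still simulates $h$ faithfully. The subtlety is that when a player "merges" two actions, ties in utility can appear at the merged action, violating genericity in the smaller game's lift. I would resolve this by perturbing the lifted utilities infinitesimally to restore uniqueness of best replies without changing any best-reply \emph{set} that matters for the dynamics (standard genericity-perturbation argument), and by checking that the random-guessing step in the reduction only ever occurs at states where the relevant player is not best-replying, so the perturbation does not interfere with absorbing PNE. Once that lemma is in hand, the rest is the bookkeeping of chaining Lemmas~\ref{lem:8} and~\ref{lem:9} down to the base cases $\G(3,3,3)$ and invoking Theorem~\ref{thm:2}, exactly parallel to the proof of Theorem~\ref{thm:7}.
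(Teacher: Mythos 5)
There are two genuine gaps, and the first is fatal to the proposed route. Your reductions cannot reach the four-player spaces with small action sets, such as $\G(2,2,2,2)$ or $\G(2,2,k,\ell)$. Removing players via Lemma~\ref{lem:9} from such a space lands you in a three-player space with a two-action player, where $h$ \emph{succeeds} on generic games (Theorems~\ref{thm:3} and~\ref{thm:11}), so no contradiction is available there; and any action-shrinking lemma in the style of Lemma~\ref{lem:8} only transfers failure from smaller action sets to larger ones, so it cannot manufacture hardness for two-action players out of the hardness of $\G(3,3,3)$. Your sentence about a $3$-by-$3$-by-$3$-by-$3$ game encoding a $3$-by-$3$-by-$3$ game, and about ``$h$ succeeding on $\G(2,2,2,2)$\dots contradicting Theorem~\ref{thm:2},'' does not assemble into a valid reduction: a two-action player cannot simulate a three-action player, and the dummy-player construction goes in the wrong direction. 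The paper closes this case with a new explicit counterexample (Lemma~\ref{lem:14}): a generic game in $\G(2,2,k,\ell)$ in which player 1 wants to match player 2, players 3 and 4 want to match each other, and player 2 wants to match player 1 except when players 3 and 4 both play $1$; under $h$ the unique PNE is unreachable from $(1,1,1,1)$. Some such construction is unavoidable, and you neither supply one nor identify that one is needed --- your plan reduces everything to Theorem~\ref{thm:2}, which is impossible here.

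The second gap is the reliance on a generic analogue of Lemma~\ref{lem:8}. You rightly flag that duplicating an action destroys genericity (the two copies are payoff-equivalent for every player, so best replies are never unique where a copy is optimal), but the perturbation fix you sketch is not clearly workable: the reduction depends on the two copies behaving identically under $h^{U^\prime}$, and any perturbation makes one copy strictly better, which changes which profiles are absorbing and which players must move. The paper never proves such a general lemma. Instead, Lemma~\ref{lem:15} pads the specific Hart--Mas-Colell $3$-by-$3$-by-$3$ counterexample with \emph{weakly dominated} extra actions and re-verifies directly that the key invariant (every reachable profile contains both a $1$ and a $2$) survives the padding; this handles all of $\G(k_1,k_2,k_3)$ with $k_1,k_2,k_3\geq 3$ in one step, with no general action-shrinking machinery. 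Your reading of the $n=2$ subcase as excluded by intent (consistent with Theorem~\ref{thm:3}) matches the paper, and your use of Lemma~\ref{lem:9} to reduce $n>4$ down to four and then three players is the correct skeleton; but without a four-player counterexample and a sound treatment of the three-player, all-actions-at-least-three case, the argument is incomplete.
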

Before proving this theorem, we present two lemmas whose full proofs are in the appendix. Lemma \ref{lem:14} says that $h$ does not succeed on all 2-by-2-by-$k$-by-$\ell$ generic games. It is proved by giving an example of such a game.

\begin{lem}\label{lem:14}
For every $k,\ell\geq 2$, $h$ does not succeed on $\G(2,2,k,\ell)$.
\end{lem}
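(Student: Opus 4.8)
The plan is to exhibit, for each $k,\ell \geq 2$, a specific generic $2$-by-$2$-by-$k$-by-$\ell$ game that has a PNE but on which $h$ fails to converge from some starting state. By Lemma \ref{lem:8} it suffices to handle the base case $k=\ell=2$: if $h$ fails on some generic $2$-by-$2$-by-$2$-by-$2$ game with a PNE, then (embedding that game into a larger profile space by giving the extra actions of players $3$ and $4$ strictly worse payoffs everywhere, which preserves genericity and the PNE and keeps those extra actions out of every best-reply set) $h$ fails on $\G(2,2,k,\ell)$ as well. Actually the cleanest route is the reverse direction already available to us: I would first construct the counterexample on $\G(2,2,2,2)$ and then invoke the genericity-preserving version of Lemma \ref{lem:9}, together with a dominated-action padding argument, to lift it to all $k,\ell \geq 2$; so the whole burden is the single $16$-profile game.

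The construction itself mimics the structure underlying Theorem \ref{thm:2} and Lemma \ref{lem:10}: I want a ``bad cycle'' under $h$ that is disjoint from the basin of the unique (or any) PNE. The idea is to let players $1$ and $2$ play a $2$-by-$2$ coordination-type subgame whose best-reply structure, as a function of the actions of players $3$ and $4$, is rigged so that whenever $(a_3,a_4)$ lies in one region of the $2$-by-$2$ grid, players $1,2$ are forced to chase each other (creating movement w.p.p. by Observation \ref{obs:4}), and meanwhile players $3$ and $4$ have a matching-pennies-like interaction among themselves that is itself non-absorbing on that region. One sets up the payoffs so that there is exactly one profile where all four players simultaneously best-reply — the PNE — sitting in a ``corner'' of the state graph, while a set of states forming a directed cycle under the support of $h$ never reaches that corner: from every state in the bad set, every profile in the support of $h$ stays in the bad set, because at each such state at least one player is not best-replying (so must move, w.p.p., by Observation \ref{obs:4}) yet no combination of the moves $h$ permits can escape. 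Since $h$ assigns positive probability to staying put for best-replying players and to each action for moving players, I must check that for every state in the bad set, the set of profiles reachable in one step under $h$ — i.e., the product over players of $\{a_i\}$ if $i$ best-replies and $A_i$ otherwise — is contained in the bad set. That containment, verified over all profiles in the bad set, shows the bad set is closed, hence no PNE is $h$-reachable from any of its states, so $h$ does not self-stabilize.

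The main obstacle is the design of the game: with four players and sixteen profiles I need the best-reply regions of all four utility functions to be mutually consistent — generic (so each $|BR_i(a)|=1$), admitting a PNE, and producing a closed non-PNE set under the specific ``stay if best-replying, else anything'' dynamics of $h$. The subtlety is that $h$ is quite permissive: a moving player can jump to \emph{any} of its actions, including back to where it was, so the bad set has to be robust to every such jump. The way I would tame this is to arrange that on the bad set the \emph{non-best-replying} players only ever have two actions to choose among (true automatically here since all $k_i=2$), and to make the two $2$-by-$2$ subgames (players $1,2$ versus players $3,4$) each individually PNE-free on the relevant region — two interlocking ``matching pennies'' gadgets — so that on the bad set someone is always unhappy no matter what, while the genuine PNE requires the two gadgets to be ``switched off'' by the boundary actions, which the dynamics on the bad set never selects. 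Once a candidate game is written down as an explicit payoff table, verifying the closure property and genericity is a finite check; I would present the table and the one-step reachability diagram, and note that the PNE lies outside the displayed closed set, which completes the proof. Composing with Lemma \ref{lem:8} (and the padding argument) then yields the statement for all $k,\ell \geq 2$.
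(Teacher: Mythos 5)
Your high-level plan (exhibit a generic game with a PNE together with a set of states that is closed under the one-step reachability relation of $h$ and disjoint from the PNE) is the right kind of argument, and your closure criterion --- that for every state $a$ in the bad set the product of $\{a_i\}$ over best-replying players and $A_i$ over non-best-replying players must lie in the bad set --- is exactly what needs to be checked. But the proposal never produces the game, and the game \emph{is} the lemma; everything you write defers the only nontrivial step (``once a candidate game is written down \dots is a finite check''). Worse, the design you sketch points in a direction that is hard to make work: if players $3$ and $4$ play a matching-pennies-like gadget that is ``itself non-absorbing,'' then at every state one of them is free to jump to either of its actions, so over a few steps the $(a_3,a_4)$ coordinate visits all four values, and the bad set is forced to be very large --- typically large enough to swallow the PNE. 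The construction the paper uses inverts this: players $3$ and $4$ play a pure \emph{coordination} game, so once they match they are both best-replying and, by Observation~\ref{obs:4}, frozen; freezing them at $(1,1)$ flips player $2$'s preference so that players $1$ and $2$ are locked in matching pennies on the closed set $\{1,2\}^2\times\{(1,1)\}$, while the unique PNE $(2,2,2,2)$ lies elsewhere. The key missing idea is thus to exploit the ``cannot move when best-replying'' half of $h$ to freeze a sub-coalition in a bad configuration, rather than to keep everyone in motion.

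A secondary problem is your reduction from $\G(2,2,k,\ell)$ to $\G(2,2,2,2)$. You invoke Lemma~\ref{lem:8}, but that lemma is stated and proved only for $\U$, not for $\G$: its proof makes every player indifferent between player $i$'s actions $k_i$ and $k_i+1$, which destroys genericity. You gesture at a ``dominated-action padding argument'' in its place, but that argument (as in Lemma~\ref{lem:15}) needs its own care to preserve unique best replies and would have to be carried out. The paper avoids the issue entirely by writing the counterexample down for arbitrary $k_3,k_4\geq 2$ in one stroke (players $3$ and $4$ want $a_3=a_4$ over their full action sets), so no reduction is needed.
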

Lemma \ref{lem:15} says that $h$ doesn't succeed on all three-player generic games in which all players have at least three actions. This is demonstrated by simple modifications of the 3-by-3-by-3 game used by Hart and Mas-Colell in their proof of Theorem \ref{thm:2}.

\begin{lem}\label{lem:15}
For every $k_1,k_2,k_3\geq 3$, $h$ does not succeed on $\G(k_1,k_2,k_3)$
\end{lem}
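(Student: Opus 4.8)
The plan is to bootstrap from the $3$-by-$3$-by-$3$ case, which is already settled: by Theorem \ref{thm:2} together with Observation \ref{obs:5}, $h$ does not succeed on $\G(3,3,3)$. I would fix the explicit generic $3$-by-$3$-by-$3$ game $U^\ast$ from Hart and Mas-Colell's proof of Theorem \ref{thm:2}~\cite{HarMas06}; it has a PNE but admits a state $x^\ast$ from which no PNE is $h^{U^\ast}$-reachable. Writing $S^\ast$ for the set of states that are $h^{U^\ast}$-reachable from $x^\ast$, the set $S^\ast$ is invariant under $h^{U^\ast}$ (a reachable set always is) and contains no PNE, since every PNE is $h^{U^\ast}$-absorbing. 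This $S^\ast$ is the finite certificate of failure that I will transport to the larger profile space.

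Given $k_1,k_2,k_3\geq 3$, I would build a larger game on $A=\{1,\dots,k_1\}\times\{1,\dots,k_2\}\times\{1,\dots,k_3\}$, calling values exceeding $3$ \emph{extra} actions and calling $3$ their \emph{shadow}. For each player $i$ and profile $a$, set $u_i(a)$ equal to $u^\ast_i$ of the profile obtained from $a$ by replacing the extra coordinates of the \emph{other} players by their shadow $3$, whenever $a_i$ is a core action; and set $u_i(a)$ strictly below all of $i$'s core-action payoffs at that opponent profile whenever $a_i$ is an extra action. Finally, perturb so that every best reply is unique. The perturbation makes $U$ generic; the second clause makes every extra action \emph{strictly dominated} for its owner, hence never a best reply, so every PNE of $U$ lies in the core, where it coincides with a PNE of $U^\ast$. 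Thus $U\in\G(k_1,k_2,k_3)$, its PNE are exactly those of $U^\ast$, its core best-reply structure agrees with that of $U^\ast$, and any opponent of a player sitting at an extra action sees that player exactly as if it were playing its shadow $3$.

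Since every PNE of $U$ is a core state not in $S^\ast$, it suffices to prove that every core state that is $h^U$-reachable from $x^\ast$ already lies in $S^\ast$; this immediately gives that no PNE is $h^U$-reachable from $x^\ast$, so $h$ fails on $U$ and does not succeed on $\G(k_1,k_2,k_3)$. I would prove the containment by a simulation argument: given any partial $h^U$-run from $x^\ast$, produce a partial $h^{U^\ast}$-run agreeing with it on all core coordinates, keeping each player that is parked at an extra action pinned to its shadow $3$ in the simulation. This is legal for every opponent move and for \emph{entering} or \emph{remaining at} an extra action, because in $U^\ast$ a player that is not best-replying has full support (it may move to any action or stay put), and opponents see the real player's extra action and the simulated player's shadow $3$ identically; so whenever a player first moves from a core action into an extra action, the simulation can instead move it to $3$, and it can then be held at $3$ (whether or not $3$ is currently its best reply). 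Once the remaining steps are handled, transporting a core endpoint back through this projection places it inside the $h^{U^\ast}$-invariant set $S^\ast$.

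The main obstacle is the fidelity of the simulation at precisely the steps where a player \emph{leaves} an extra action for a core action that is not its best reply. A strictly dominated action is never a best reply, so its owner is always mobile under $h^U$ and may jump to any core action; but the pinned shadow at $3$ is \emph{frozen} in $U^\ast$ exactly when $3$ is its unique best reply, in which case the simulation cannot follow such a move. Controlling this ``over-mobility'' is the crux. I would handle it by exploiting the explicit, finite structure of $S^\ast$: choosing the shadow action and, if needed, lightly modifying $U^\ast$ within $\G(3,3,3)$ so that every such forced return lands again inside $S^\ast$ rather than in the basin of a PNE. Because $S^\ast$ is a fixed finite object, verifying that these returns stay within it is a bounded check, and once it is established the argument is uniform in $k_1,k_2,k_3$.
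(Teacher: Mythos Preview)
Your architecture is the same as the paper's: take Hart and Mas-Colell's explicit $3$-by-$3$-by-$3$ generic game $U^\ast$, pad each player's action set with strictly dominated ``extra'' actions that the opponents perceive as if the player were playing $3$, and argue that the resulting game still has a PNE while some non-PNE set remains $h$-invariant. Where you diverge from the paper is in how you verify the invariance, and this is where you make your own life hard.

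The paper does \emph{not} set up a simulation of $h^U$-runs by $h^{U^\ast}$-runs. Instead it uses the concrete combinatorial invariant that already drives HM's proof: every profile containing both a $1$ and a $2$ maps, under $h^{U^\ast}$, only to profiles that again contain both a $1$ and a $2$ (because at any such profile, a player currently at $1$ and a player currently at $2$ are best-replying and hence frozen). In the padded game this invariant is checked directly in two lines. If the current profile $a$ lies in $\{1,2,3\}^3$, the HM analysis applies verbatim (all best replies in $U'$ at core profiles coincide with those in $U^\ast$). Otherwise, since the invariant forces at least one $1$ and one $2$ among three players, \emph{exactly one} player is at an extra action, and the remaining two players---one at $1$, one at $2$---see that player as if at $3$; by the HM property they are both best-replying, so they stay put, and the next profile still contains a $1$ and a $2$ regardless of where the extra-action player goes. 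Starting from $(1,2,1)$, the unique PNE $(3,3,3)$ is therefore never reached.

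This dissolves exactly the ``over-mobility'' obstacle you flagged. Your worry is that when the player at the extra action jumps back to a core action while its simulated shadow at $3$ is frozen, the simulation cannot follow. But with the HM invariant in hand you never need the simulation to follow: the two frozen neighbors at $1$ and $2$ carry the invariant forward on their own, no matter what the mobile player does. There is no need for a ``bounded check'' on $S^\ast$, no need to ``lightly modify $U^\ast$,'' and no need for a separate perturbation step (the paper's padding is already generic because every extra action is strictly dominated and the core best replies inherit uniqueness from $U^\ast$). The paper's argument is uniform in $k_1,k_2,k_3$ for free, since the invariant and the case split do not depend on how many extra actions are added.
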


\begin{proofof}{Theorem \ref{thm:13}}
By Observation \ref{obs:5}, if suffices to show that $h$ does not succeed on $\G(|A_1|,...,|A_n|)$. Assume for contradiction that $h$ does succeed on $\G(|A_1|,...,|A_n|)$. If $n=3$ and $h$ succeeds on $\G(|A_1|,|A_2|,|A_3|)$, then by Lemma \ref{lem:15} we cannot have $|A_1|,|A_2|,|A_3|>2$. If $n=4$ and $h$ succeeds on $\G(|A_1|,...,|A_4|)$, then by Lemma \ref{lem:14} there are distinct $i,j,k\in\{1,2,3,4\}$ such that $|A_i|,|A_j|,|A_k|>2$. But by Lemma \ref{lem:9}, $h$ succeeds on $\G(|A_i|,|A_j|,|A_k|)$, contradicting lemma \ref{lem:15}. If $n>4$ and $h$ succeeds on $\G(|A_1|,...,|A_n|)$, then by repeatedly applying Lemma \ref{lem:14}, $h$ succeeds on $\G(|A_1|,...,|A_4|)$, which we have already shown to be impossible.
\end{proofof}

\section{Deterministic uncoupled dynamics}\label{sec:4}

Both $h$ and the strategy mapping used by Hart and Mas-Colell~\cite{HarMas06} to prove Theorem \ref{thm:1} are variations on random search. For deterministic dynamics, an exhaustive search requires more structure, and the challenge for deterministic players in short-recall uncoupled dynamics is in keeping track of their progress in the search.

\subsection{Positive results}
We show that there are successful 3-recall deterministic dynamics by using repeated profiles to coordinate.
\begin{thm}\label{thm:16}
For every profile space A, there exists a deterministic uncoupled 3-recall stationary strategy mapping that succeeds on all games $(A,U)$.
\end{thm}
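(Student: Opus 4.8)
The plan is to design a deterministic 3-recall strategy mapping in which the three-profile window encodes, via equality/inequality patterns among consecutive profiles, a small ``control state'' telling each player what phase of a coordinated exhaustive search it is in. The key idea is that with 3-recall a player can see the pattern $(a,a,a)$, $(a,a,b)$, $(a,b,b)$, $(a,b,c)$, etc., and these patterns can serve as a distributed clock. Concretely, I would use the ``repeated profile'' $(a,a,a)$ as a synchronization signal: whenever the state is constant, all players know a search round is beginning (or has just ended successfully). From a constant state $(a,a,a)$, if every player is best-replying then $a$ is a PNE and the dynamics stay put; this gives absorption at PNE for free, matching the first condition of Observation~\ref{obs:4}. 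If some player is not best-replying at $a$, the players must escape, and the escape must be deterministic yet still eventually reach every profile from every starting state.

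\emph{The mechanism I would use} is a deterministic enumeration of the profile space driven by an odometer. Fix a linear order on each $A_i$ (the players can do this uncoupledly, since each orders only its own action set). A ``master'' player---say player~1, who exists since $n\ge 2$---uses the control state to cycle its own action through $A_1$, and uses the wrap-around (returning to the first element of $A_1$) as a carry signal to advance player~2, and so on, so that over enough constant-state resets the joint action runs through all of $A$. The subtlety is that a single timestep's move changes the whole profile, so ``advancing the odometer'' must be implemented over a short fixed-length gadget of timesteps, during which the non-constant 3-recall patterns tell each player exactly which micro-step of the gadget is currently executing and hence what to play next. Because $r=3$, there is enough room to encode: (i) ``we are resetting to a canonical profile after a test,'' (ii) ``we are installing the next odometer value,'' and (iii) ``we are holding the candidate profile to test whether it is a PNE.'' Holding a candidate for two steps produces the pattern $(p,p,\cdot)$; if it survives as $(p,p,p)$ we have either found a PNE or we deterministically move to the next odometer value.

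\emph{The main obstacle} I expect is handling the arbitrary initial state: the 3-recall window may start in the middle of a gadget, or in a pattern that does not correspond to any legitimate phase of the protocol, and a selfishly-behaving player who happens to be best-replying will refuse to move (we cannot force a best-replying player to change, by the spirit of Observation~\ref{obs:4}, and more importantly such coercion would break absorption). So the protocol must be self-stabilizing in the strong sense: from \emph{any} garbage 3-tuple, the legitimate players' moves must drive the system into a recognizable control state within boundedly many steps. The standard trick is a ``ditto/reset'' rule: whenever a player sees a state it does not recognize as a valid protocol phase, it repeats its current action; once enough players do this the profile becomes constant, which \emph{is} a recognized phase, and the search restarts cleanly. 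One must check this repair step cannot deadlock at a non-PNE constant state---but at a non-PNE constant state some player is not best-replying, and the protocol's escape rule applies to exactly that situation, so no spurious absorbing state is created. I would also need to verify that the odometer, restarted after each failed test, genuinely visits a PNE: since a PNE exists by hypothesis and resets occur infinitely often in any non-absorbing run, the enumeration eventually installs the PNE profile as a candidate, the test succeeds, and the system absorbs there.

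\emph{Remaining routine steps}, which I would spell out but not belabor: writing down each $f^U_i$ as an explicit case analysis on the 3-recall window (recognized-phase cases plus the catch-all ditto case); checking uncoupledness (each player's rule refers only to $u_i$, used solely to decide ``am I best-replying here?'', and to its own fixed action-order); checking stationarity and determinism (immediate from the description); and checking that the only absorbing states are constant PNE profiles. The convergence proof is then a short argument: every non-absorbing run hits a constant state infinitely often, each such visit advances the odometer by at least one in a bounded number of steps, so after finitely many resets the PNE is tested and the run absorbs.
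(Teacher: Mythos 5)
Your architecture is essentially the paper's: equality patterns in the $3$-profile window act as a distributed clock distinguishing ``query/test,'' ``move-on,'' and ``repeat/ditto'' phases; a repeated profile marks the current PNE candidate; a cyclic (lexicographic) enumeration of $A$ drives an exhaustive search; and the catch-all ditto rule repairs arbitrary initial windows by collapsing them to a constant state, from which the search restarts. The convergence argument you sketch (every non-absorbing run revisits test states, each revisit advances the enumeration, so a PNE is eventually installed and the run absorbs) is also the paper's.

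There is, however, one claim in your sketch that is both false and, if you actually adhered to it, fatal: ``we cannot force a best-replying player to change\dots such coercion would break absorption.'' Observation~\ref{obs:4} is a necessary condition only for \emph{historyless} dynamics, where states are single profiles and hence every PNE profile must itself be absorbing. With $3$-recall, the only states that must be absorbing are constant windows $(p,p,p)$ with $p$ a PNE, so the protocol is free to order a best-replying player to move in any other phase --- and it must: to install the next candidate $\sigma(a)$ after candidate $a$ fails, player $i$ has to play $\sigma_i(a)$ whether or not it is currently best-replying. If best-replying players never move, the enumeration cannot reach arbitrary profiles (a player sitting on a weakly dominant action never budges) and the search degenerates into a deterministic best-reply dynamic, which can cycle; this restriction is precisely the weakness that Theorem~\ref{thm:18} exploits. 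The paper's move-on rule, $f^U_i(a,b,c)=\sigma_i(a)$ whenever $a=b\neq c$, performs exactly this coercion. Relatedly, your multi-step ``gadget'' for advancing the odometer is unnecessary: every player observes the full profiles in its window, so each can compute its own coordinate of $\sigma(a)$ in a single step, which is why three phases (and no finer micro-steps) suffice. Once you drop the no-coercion constraint, your plan goes through and is the paper's proof.
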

\begin{proof}
Let $n\geq 2$, $k_1,...,k_n\geq 2$, and $A=\{1,...,k_1\}\times...\times\{1,...,k_n\}$. It suffices to show that such a strategy mapping exists for $\U(A)$.
Let $\sigma:A\to A$ be a cyclic permutation on the profiles. We write $\sigma_i(a)$ for the action of player $i$ in $\sigma(a)$. Let $f:\U(A)\to\mathcal{F}(A)$ be the strategy mapping such that, for every game $U\in\U(A)$, player $i\in\{1,...,n\}$, and state $x=(a,b,c)\in A^3$,
\[f^U_i(x)=\left\{\begin{array}{ll}
c_i\;&\;\mbox{if }b=c\mbox{ and }c_i\in BR_i(c)\\
\min BR_i(c)\;&\;\mbox{if }b=c\mbox{ and }c_i\not\in BR_i(c)\\
\sigma_i(a)\;&\;\mbox{if }a=b\neq c\\
c_i\;&\;\mbox{otherwise}.
\end{array}\right.\]
Informally, the players use repetition to keep track of which profile is the current ``PNE candidate'' in each step. If a profile has just been repeated, then it is the current candidate, and each player plays a best reply to it, with a preference against moving. If the players look back and see that some profile $a$ was repeated in the past but then followed by a different profile, they infer that $a$ was rejected as a candidate and move on by playing $a$'s successor, $\sigma(a)$. Otherwise the players repeat the most recent profile, establishing it as the new candidate. We call these three types of states \emph{query}, \emph{move-on}, and \emph{repeat} states, respectively. Here ``query'' refers to asking each player for one of its best replies to $b$.

Let $U\in\U(A)$ be a game with at least one PNE. We wish to show that $f^U$ guarantees convergence to a PNE. Let $x=(a,b,c)\in A^3$, and let $y$ be the next state $(b,c,f^U(x))$. If $x$ is a repeat state, then $y=(b,c,c)$, which is a query state. If $x$ is a move-on state, then $b\neq c$, and $y=(b,c,\sigma(a))$. If $c=\sigma(a)$, then this is a query state; otherwise, it's a repeat state, which will be followed by the query state $(c,\sigma(a),\sigma(a))$. Thus every non-query state will be followed within two steps by a query state.

Now let $x=(a,b,b)\in A^3$ be a query state, and let $y$ and $z$ be the next two states. If $b$ is a PNE, then $y=(b,b,b)$, which is an absorbing state. Otherwise, $y=(b,b,c)$ for some $c\neq b$, so $y$ is a move-on state, which will be followed by a query state $(b,\sigma(b),\sigma(b))$ or $(c,\sigma(b),\sigma(b))$ within two steps. Let $p$ be a PNE for $U$. Since $\sigma$ is cyclic, $p=\sigma^r(b)$ for some $r\in\N$. So $(p,p,p)$ is reachable from $x$ unless $\sigma^s(b)$ is a PNE for some $s<r$. It follows that $f^U$ guarantees convergence to a PNE, so $f$ succeeds on $\U(A)$.
\end{proof}

Recall that Lemma \ref{lem:8} says that in the stochastic setting, adding actions to a profile space $A$ does not make success on $\U(A)$ any easier. In light of that result, it is perhaps surprising that we can improve on the above bound when every player has sufficiently many actions.

\begin{thm}\label{thm:17}
If $A$ is a profile space in which every player has at least four actions, then there exists a $2$-recall deterministic uncoupled stationary strategy mapping that succeeds on all games $(A,U)$.
\end{thm}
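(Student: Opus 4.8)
The plan is to mimic the $3$-recall construction of Theorem~\ref{thm:16} but to exploit the extra actions to encode, within a single action profile, both a ``PNE candidate'' and a one-bit (or few-bit) flag telling the players what phase of the search they are in. With at least four actions per player we can split each action set $A_i=\{1,\dots,k_i\}$ conceptually into a ``payload'' part and a small ``control'' part, so that from a $2$-recall state $(a,b)$ the players can reconstruct everything the $3$-recall protocol learned from repetition. Concretely, I would fix a cyclic permutation $\sigma$ on a sub-cube $A' = \{1,\dots,k_1-1\}\times\dots\times\{1,\dots,k_n-1\}$ of ``candidate'' profiles, reserving the top action $k_i$ of each player as a signalling symbol. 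A state is interpreted as a \emph{query} state when the previous profile $a$ lies in $A'$ and the current profile $b$ agrees with $a$ on the payload coordinates but some players have raised their flag to $k_i$; it is interpreted as a \emph{confirm/advance} state otherwise. The raised-flag profile plays the role that the repeated profile played before: it is unambiguous evidence, visible in one step of recall, that $a$ was just queried, so the players need not wait for a literal repetition.

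The key steps, in order, are: (i) specify the decomposition of each $A_i$ and the candidate sub-cube $A'$, and define $\sigma$ on $A'$; (ii) define $f^U_i$ by cases on the $2$-recall state $(a,b)$ — roughly, if $b$ is a bona fide candidate in $A'$ and is not a PNE, each non-best-replying player raises its flag (moves to $k_i$) while best-replying players stay, producing a query state; from a query state, if every player was best-replying the candidate is a PNE and the protocol freezes (this needs the absorbing condition, so a PNE of $A'$ must map to a state that is fixed by $f^U$); if some player was not best-replying, the players read off the best replies, overwrite the payload accordingly or advance to $\sigma$ of the candidate, and lower all flags; (iii) check that every state leads within a bounded number of steps to a query state on some candidate profile, and that the sequence of queried candidates follows the cyclic order of $\sigma$ and hence reaches a PNE if one exists in $A'$; (iv) show every PNE of $U$ corresponds to an $f^U$-absorbing state and no non-PNE configuration is absorbing; (v) reduce the general profile space to one where the search cube $A'$ actually contains a PNE — since a PNE of $U$ need not lie in $A'$ once we have carved out the flag symbols, I would instead let the ``payload'' range over \emph{all} of $A_i$ and use only the remaining $\ge 2$ actions per player purely as flags, so that $A' = A$ and every PNE is a candidate; the arithmetic $k_i\ge 4$ is what leaves room for at least two distinct non-flag-conflicting encodings needed to disambiguate query versus advance with only two steps of memory.

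The main obstacle I expect is step (ii)/(iv): making the case analysis of $f^U_i$ simultaneously (a) \emph{uncoupled}, so each player's rule depends only on $(a,b)$ and its own $BR_i$, (b) \emph{deterministic and consistent}, so that all players agree on whether the current state is a query, a confirm, or an advance — with only two profiles of history there is genuine risk of an ambiguous state that one player reads as ``advance the candidate'' and another reads as ``we are mid-query,'' which would derail the search — and (c) \emph{absorbing exactly at PNE}. The $3$-recall protocol bought consistency cheaply because a literal repetition $b=c$ is symmetric information available to everyone; here I must engineer a substitute invariant (e.g., ``at most one of $a,b$ uses flag symbols, and flags are raised only by non-best-repliers'') that is preserved by $f^U$ and that every player can verify locally. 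Getting this invariant to be self-enforcing — no reachable state violates it, and from every state respecting it the next state also respects it — is the crux; once it holds, the convergence argument is essentially the cyclic-search argument from Theorem~\ref{thm:16}. A secondary subtlety is ensuring that when players advance from a rejected candidate $c$ to $\sigma(c)$, they all compute the same $\sigma(c)$ from local information, which is immediate since $\sigma$ is a fixed public permutation and $c$ is fully visible in the state.
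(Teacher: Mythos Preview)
Your plan has a genuine gap at exactly the point you flag in step~(v), and the patch you propose there is incoherent. If you reserve the top action $k_i$ of each player as a flag symbol, then a PNE of $U$ may well have some player playing $k_i$, so it will never appear as a candidate in your sub-cube $A'$ and the search will miss it. Your fix---``let the payload range over all of $A_i$ and use only the remaining $\ge 2$ actions per player purely as flags''---is self-contradictory: if the payload already uses all of $A_i$, there are no remaining actions to serve as flags. More generally, any scheme that encodes phase information by \emph{which} actions are played, rather than by how consecutive profiles relate to one another, will collide with the requirement that every profile be a potential absorbing PNE.

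The paper's construction avoids this by encoding the phase in the \emph{displacement} $b_j-a_j\bmod k_j$ between the two remembered profiles, not in reserved symbols. Concretely, with $\sigma$ the lexicographic successor on all of $A$, a state $(a,b)$ is declared a \emph{move-on} state if $a\neq b$ and $a_j-b_j\bmod k_j\in\{0,1\}$ for every $j$, a \emph{query} state if $b_j-a_j\bmod k_j\in\{0,1,2\}$ for every $j$, and a \emph{repeat} state otherwise. In a query state each player signals non-best-reply at $b$ by playing $b_i-1\bmod k_i$ (a displacement of $1$, not a reserved action), which makes the next state a move-on state; from a move-on state the players jump to $\sigma(a)$, which is within displacement $2$ of $b$ and hence yields a query state. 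The hypothesis $k_i\ge4$ is used precisely to make the move-on and query conditions mutually exclusive: if some $a_j-b_j\equiv1$, then $b_j-a_j\equiv k_j-1\ge3$. This is the substitute invariant you were looking for in (ii)/(iv), and it is automatically visible to every player from $(a,b)$ alone, so the consistency worry disappears. The convergence argument is then the cyclic-search argument you already have.
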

\begin{proof}
Let $n\geq 2$, $k_1,...,k_n\geq 4$, and $A=\{1,...,k_1\}\times...\times\{1,...,k_n\}$. It suffices to show that such a strategy mapping exists for $\U(A)$.

Define a permutation $\sigma:A\to A$ such that for every $a\in A$, $\sigma(a)$ is $a$'s lexicographic successor. Formally, $\sigma(a)=(\sigma_1(a),...,\sigma_n(a))$ where for $i=1,...,n-1$,
\[\sigma_i(a)=\left\{\begin{array}{ll}
a_i+1\bmod{k_i}\;&\;\mbox{if }a_j=k_j\mbox{ for every }j\in\{i+1,...,n\}\\
a_i\;&\;\mbox{otherwise},
\end{array}\right.
\]
and $\sigma_n(a)=a_n+1\bmod k_n$. Observe then that $\sigma$ is cyclic, and for each player $i$ and $a\in A$, we have
\[\sigma_i(a)-a_i \bmod{k_i}\in\{0,1\}.\]

We now describe a strategy mapping $f:\U(A)\to\mathcal{F}(A)$. To each $U\in\U$, $f$ assigns the strategy vector $f^U$ defined as follows.
At state $x=(a,b)\in A^2$, $f^U$ differentiates between three types of states, each named according to the event it prompts:
\begin{itemize}
\item \emph{move-on}: If $a\neq b$ and $a_j-b_j\bmod{k_j}\in \{0,1\}$ for every $j\in\{1,...,n\}$, then the players ``move on" from $a$, in the sense that each player $i$ plays $\sigma_i(a)$, giving $f^U(x)=\sigma(a)$.
\item \emph{query}: If $b_j-a_j\bmod{k_j}\in \{0,1,2\}$, then we ``query" each player's utility function to check whether it is $U$-best-replying at $b$. Each player $i$ answers by playing $b_i$ if it is best-replying and $b_i - 1\bmod{k_i}$ if it is not. So at query states,
\[f^U_i(x)=\left\{\begin{array}{ll}
b_i\;&\;\mbox{if }b_i\in BR_i(b)\\
b_i-1\bmod{k_i}\;&\;\mbox{otherwise},
\end{array}\right.
\]
for $i=1,...,n$.
\item \emph{repeat}: Otherwise, each player $i$ ``repeats" by playing $b_i$, giving $f^U(x)=b.$
\end{itemize}
Notice that because $k_1,...,k_n\geq4$, it is never the case that both $a_j-b_j\bmod{k_j}\in \{0,1\}$ and $b_j-a_j\bmod{k_j}\in \{0,1,2\}$. Thus the conditions for the \emph{move-on} and \emph{query} types are mutually exclusive, and the three state types are all disjoint.

The state following $x=(a,b)$ is $y=(b,f^U(x))$. If $x$ is a move-on state, then $y=(b,\sigma(a))$. Since for every player $i$, $a_i-b_i\bmod{k_i}\in\{0,1\}$ and $\sigma(a)_i-a_i\bmod{k_i}\in\{0,1\}$, we have $\sigma_i(a)-b_i\bmod{k_i}\in\{0,1,2\}$, so $y$ is a query state. If $x$ is instead a query state, then $b_i-f^U_i(x)\bmod{k_i}\in\{0,1\}$ for every player $i$, so $y$ is a move-on state unless $b=f^U(x)$, in which case $y=(b,b)$ is a query state. But if $b=f^U(x)$ and $x$ was a query state, then $b_i\in BR_i(b)$ for every player $i$, i.e., $b$ is a PNE. Finally, if $x$ is a repeat state, then $y=(b,b)$ is a query state.

Thus move-on states and repeat states are always followed by query states, and ask-all states are never followed by repeat states. We conclude that with the possible exception of the initial state, every state will be a move-on or query state, and no two consecutive states will be move-on states. In particular, some query state is reachable from every initial state.

For any query state $x=(a,b)$, $x$ will be followed by $(b,b)$ if and only if $b$ is a PNE, and $(b,b)$ is an absorbing state for every PNE $b$. If $b$ is not a PNE, then $x$ will be followed will be a move-on state $(b,c)$, for some $c\in A$. This will be followed by the \emph{query} state $(c,\sigma(b))$. Continuing inductively, since $\sigma$ is cyclic, unless the players converge to a PNE, they will examine every profile $v\in A$ with a query state of the form $(u,v)$. Thus for every game $U$ with at least one PNE, $f^U$ guarantees convergence to a PNE, i.e., $f$ succeeds on $\U(A)$.
\end{proof}

While there are deterministic uncoupled $2$-recall dynamics that succeed on at least some classes that require $2$-recall in the stochastic setting, historyless dynamics of this type fail on $\U(A)$ for every profile space $A$. A proof of the following theorem is given in the appendix.
\begin{thm}\label{thm:18}
For every profile space $A$, no deterministic uncoupled historyless stationary strategy mapping succeeds on all games $(A,U)$.
\end{thm}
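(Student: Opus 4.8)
The plan is to handle two cases according to the shape of $A$. For every profile space $A$ \emph{except} those equivalent to some $(2,k)$---that is, whenever $A$ has at least three players, or has two players each with at least three actions---Theorem~\ref{thm:7} already says that no historyless uncoupled stationary strategy mapping, deterministic or not, succeeds on all games $(A,U)$, so nothing new is needed there. The remaining content is the case $A\simeq(2,k)$ with $k\geq2$; this is exactly the regime in which the randomized mapping $h$ \emph{does} succeed (Theorem~\ref{thm:6}), so determinism must be used in an essential way. Accordingly I would take $A=\{0,1\}\times\{0,\dots,k-1\}$ and argue by contradiction: suppose $f$ is a deterministic uncoupled historyless stationary strategy mapping that succeeds on all games over $A$.

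Since $f$ succeeds on all games it succeeds on all generic games, so Observation~\ref{obs:4} applies: at every profile a best-replying player keeps its action and a non-best-replying player changes it. Two consequences drive the argument. First, because player~$1$ has only two actions, its move is completely determined---at state $(a_1,x_2)$ it plays $a_1$ when $a_1\in BR_1(x_2)$ and its other action otherwise---so player~$1$'s strategy depends only on the best-reply structure $BR_1$ that $u_1$ induces. Second, since $f^U$ is deterministic, from any state the run converges or eventually cycles, and every absorbing state is a PNE; hence $f$ succeeds on a game $U$ having a PNE if and only if $f^U$ has no cycle of length $\ge 2$. Thus it suffices to produce a game over $A$ that has a PNE and for which $f^U$ has a cycle of length $\ge 2$.

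I would build this game in two stages, exploiting uncoupledness. First fix player~$2$'s utility $u_2$, chosen to have no weakly dominant action; this pins down player~$2$'s strategy $g_2=f_2(u_2)$, which is no longer under our control but is now \emph{known}. Applying the hypothesis to the ``dummy'' games in which player~$1$ has constant utility---so by Observation~\ref{obs:4} player~$1$ never moves---and player~$2$ has utility $u_2$, I get that for each fixed action $a_1$ of player~$1$ the map $j\mapsto g_2(a_1,j)$ is acyclic, i.e.\ every orbit funnels into $BR_2(a_1)$. Only then would I choose player~$1$'s best-reply structure $BR_1$, and hence its (now forced) strategy, so that player~$1$ ``rides'' these funnels: it keeps its current action while player~$2$ descends the corresponding funnel, and flips its action precisely when player~$2$ reaches $BR_2$ of player~$1$'s current action. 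Tracing the dynamics then yields a cycle---player~$1$ at $0$ while player~$2$ descends toward $BR_2(0)$, player~$1$ flips to $1$, player~$2$ descends toward $BR_2(1)$, player~$1$ flips back, and so on---while the chosen $BR_1,BR_2$ leave one further profile, off this cycle, as a PNE; realizing $BR_1,BR_2$ by genuine utilities in the standard way produces the desired game and the contradiction.

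The main obstacle is exactly this last construction for $k\ge3$. Unlike the $2$-by-$2$ situation, where \emph{both} players have two actions so every move is forced by Observation~\ref{obs:4} and one can simply write down a short cycle, here player~$2$'s move when it is not best-replying is not pinned down, so one cannot directly steer a cycle, and naive attempts to ``trap'' the dynamics fail because from a non-best-replying profile player~$2$ could jump straight onto a PNE. The fix is precisely the two-stage, uncoupledness-driven choice: commit to $u_2$, read off $g_2$ and the funneling that the dummy-player games force, and only afterwards design player~$1$'s best replies \emph{around} the funnels---picking the flip points and the location of the PNE so that the PNE lies on neither funnel path---rather than against them. Making all of these fit at once (the cycle closes up, a PNE exists, and neither funnel path meets it) requires a somewhat delicate choice of $BR_1,BR_2$ and a short case analysis on the shape of $g_2$'s funnels, including passing to a different $u_2$ should a first choice be obstructed.
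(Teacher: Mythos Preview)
Your reduction to the $2$-by-$k$ case via Theorem~\ref{thm:7} matches the paper exactly, and so does the core idea: exploit uncoupledness by committing to player~$2$'s utility first, reading off the now-fixed strategy $g_2=f_2(u_2)$, and only then designing player~$1$'s utility adversarially against $g_2$. Where the two diverge is in how that adversarial utility is built.

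The paper's construction is more direct than your funnel scheme. It fixes one concrete $u_2$ (with $BR_2(1)=\{1\}$ and $BR_2(2)=\{2,\dots,k\}$) and then defines $u_1'$ so that player~$1$ strictly prefers row~$2$ at precisely those columns $b_2\neq1$ where $g_2(1,b_2)=1$. The effect is a ``dodge'': whenever player~$2$ is about to step into the PNE column, player~$1$ simultaneously jumps to row~$2$, landing at $(2,1)$; from there both players are forced off and return to some $(1,\beta)$ with $\beta\neq1$. Thus the set $\{(1,\alpha):\alpha\neq1\}\cup\{(2,1)\}$ is forward-invariant under $f^{U'}$ and misses the unique PNE $(1,1)$. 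No explicit cycle need be identified and no case analysis on the shape of $g_2$ is required; the argument works uniformly for every $k\ge2$.

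Your plan instead commits to exhibiting an explicit cycle by riding the two funnels and parking a PNE off both of them, and you correctly flag that closing this up ``requires a somewhat delicate choice of $BR_1,BR_2$ and a short case analysis \dots\ including passing to a different $u_2$.'' That caveat is real: for a \emph{generic} $u_2$ the only candidate PNE are the funnel endpoints themselves---exactly your flip points---so you are forced into non-generic $u_2$ with $|BR_2(a_1)|\ge2$ for some $a_1$, and must then verify that the spare best reply is not swept onto the cycle by either funnel. This is plausible for $k\ge3$ but is genuine extra work that the paper's dodge sidesteps entirely; as written, your proposal is a correct outline with that step still owed.
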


\section{Future Directions}\label{sec:5}
It remains open to determine tight bounds on the minimum recall of successful deterministic uncoupled dynamics for every profile space, analogous to those given in Section~\ref{sec:3} for stochastic dynamics. In particular, we make the following conjecture.
\begin{conj}
There exists a profile space $A$ such that no deterministic uncoupled 2-recall strategy mapping succeeds on all games (A,U).
\end{conj}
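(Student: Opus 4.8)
The plan is to single out one small profile space $A$ and rule out every deterministic uncoupled $2$-recall strategy mapping on it. By Theorem~\ref{thm:17} this space must have a player with fewer than four actions, so the natural candidates are the spaces that are already hard in the stochastic historyless setting: $3$-by-$3$ (cf.\ Theorem~\ref{thm:2}) and $2$-by-$2$-by-$2$ (cf.\ Lemma~\ref{lem:10}). I would attempt the argument first on $3$-by-$3$. The overall strategy is a \emph{fooling argument} exploiting uncoupledness. Since each $f^U_i$ depends only on $u_i$, the joint transition on the state space $A^2$ is coordinate-wise with each coordinate determined by that player's own utility; hence one may freely combine player $i$'s behavior from one game with player $j$'s behavior from another simply by forming the game whose utility vector has the corresponding coordinates. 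Assuming for contradiction that some $f$ succeeds on all games over $A$, I would assemble a family of games that pairwise share individual utility functions, so that the players' moves are pinned down consistently across the whole family, and then exhibit one game in the family on which the forced moves trap the dynamics in a cycle of states that never reaches an absorbing PNE.

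The first concrete step is to extract forced behavior on the ``diagonal'' states $(p,p)\in A^2$, where the two remembered profiles coincide and a $2$-recall player has no more information than a historyless one. Using games in which every player has a strictly dominant action---so that there is a unique PNE $p=(d_1,\dots,d_n)$ and $(p,p)$ is the only possible absorbing PNE state---I would argue that success forces $f^U(p,p)=p$: every player must stay at a repeated equilibrium. Dually, by using games in which a chosen profile $q$ is not an equilibrium while the unique PNE lies elsewhere, I would force at least one player to move at the repeated state $(q,q)$. Together these reproduce, on the diagonal, the historyless dichotomy implicit in Observation~\ref{obs:4}: at a repeated profile each player stays precisely when it is best-replying.

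With the diagonal behavior fixed, the second step is to propagate control to the off-diagonal states that the dynamics must traverse between repetitions, and here is where I expect the real difficulty. In the $2$-recall setting a player at a state $(a,b)$ may condition on the remembered profile $a$, so unlike the historyless case its move is not determined by $b$ alone; this extra freedom is exactly what the positive construction of Theorem~\ref{thm:17} exploits to organize a search. To force a move at a non-repeated state I would embed it in a game whose convergence requirement leaves the player no choice, while controlling the remembered coordinate $a$ so that the same value of $u_i$ is reused consistently elsewhere in the family. The main obstacle is maintaining this consistency across enough states to close a cycle: each reuse of a utility function imposes a new constraint on the fixed function $f^U_i$, and the challenge is to design the family---likely around a Shapley-type cyclic $3$-by-$3$ game with a planted PNE---so that these constraints are simultaneously satisfiable yet jointly drive the trajectory around a closed orbit disjoint from every equilibrium. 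Showing that no memory-based escape from this orbit is available for \emph{any} uncoupled choice of the $f^U_i$ is the crux, and is precisely why the statement remains a conjecture rather than a theorem.
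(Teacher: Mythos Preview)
The statement you are addressing is a \emph{conjecture}, not a theorem: the paper does not prove it and offers no argument beyond listing it as an open problem in Section~\ref{sec:5}. There is therefore no ``paper's own proof'' to compare against.

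Your proposal is not a proof either, and to your credit you say so explicitly in the last sentence. What you have written is a plausible research plan---narrow to a small profile space forced by Theorem~\ref{thm:17}, pin down diagonal behavior via an analogue of Observation~\ref{obs:4}, then try to propagate constraints off the diagonal by gluing utility functions across a family of games---but the decisive step, closing a cycle that no uncoupled $2$-recall choice can escape, is left entirely open. In particular, the diagonal argument you sketch does not obviously extend: at an off-diagonal state $(a,b)$ a player's move may depend on $a$ in ways you cannot control by fixing $u_i$ alone, and the ``family of games'' device must simultaneously fix \emph{all} relevant off-diagonal transitions for each player, which is a combinatorially much larger set of constraints than in the historyless proofs of Theorem~\ref{thm:2} or Lemma~\ref{lem:10}. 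Until that step is carried out, the proposal is a strategy outline, not a proof, which is consistent with the paper's own assessment that the question is open.
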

The same questions answered in this work may naturally be asked for other important classes of games (e.g., symmetric games) and other equilibrium concepts, especially mixed Nash equilibrium. More generally, the resources (e.g., recall, memory) required by uncoupled self-stabilizing dynamics in asynchronous environments should be investigated.

\bibliography{ssud}
\bibliographystyle{abbrv}

\appendix
\setcounter{secnumdepth}{0}
\section{Appendix}
\newtheorem*{thm-6}{Theorem~\ref{thm:6}}
\begin{thm-6}
For every two-player profile space $A$ in which one player has only two actions, $h$ succeeds on all games $(A,U)$.
\end{thm-6}
\begin{proof}
Let $k\geq 2$. It suffices to show that $h$ succeeds on $\U(2,k)$. Let $A=\{1,2\}\times\{1,...,k\}$ and $U=(u_1,u_2)\in\U(A)$. Suppose that $U$ has at least one PNE, and recall that every PNE for $U$ is an $h^U$-absorbing state. Let $a=(a_1,a_2)\in A$, and consider four cases.
\begin{enumerate}
\item Player 1 is best-replying at $a$ and $a_1=p_1$ for some PNE $p=(p_1,p_2)$. Then either player $2$ is also best-replying and $a$ is a PNE, or $h^U_2(a)=p_2$ w.p.p., so $h^U(a)$ is a PNE w.p.p.
\item Player 1 is not best-replying at $a$ and there is no PNE $p$ such that $a_1=p_1$. Then w.p.p. $h^U_1(a)\neq a_1$ and $h^U_2(a)=a_2$. Since we assumed that $U$ has a PNE, $h^U(a)$ is then an instance of case 1.
\item Player 1 is best-replying at $a$ and there is no PNE $p$ such that $a_1=p_1$. Then player 2 is not best-replying at $a$, so w.p.p. $h^U_2(a)\in BR_2(a)$, but $h^U(a)$ cannot be a PNE since $h^U_1(a)=a_1$. Then player 1 is not best-replying at $h^U(a)$, i.e., $h^U(a)$ is an instance of case 2.
\item Player 1 is not best-replying at $a$ and $a_1=p_1$ for some PNE $p=(p_1,p_2)$. Then w.p.p. $h^U_1(a)\neq a_1$ and $h^U_2(a)=a_2$, in which case player 1 is best-replying at $h^U(a)=(h^U_1(a),a_2)$, since player 1 has only two actions. Then $h^U(a)$ is an instance of case 1 or 3.
\end{enumerate}
We conclude that from every state $a\in A$, some PNE for $U$ is $h^U$-reachable from $a$. Thus $h$ succeeds on $\U(A)$.
\end{proof}

\newtheorem*{lem-8}{Theorem~\ref{lem:8}}
\begin{lem-8}
Let $n\geq 2$, $k_1,...,k_n\geq2$, and $i\in\{1,...,n\}$. If $h$ succeeds on $\U(k_1,...,k_i+1,...,k_n)$, then $h$ succeeds on $\U(k_1,...,k_i,...,k_n)$.
\end{lem-8}
\begin{proof}
Let
\begin{align*}
A&=\{1,...,k_1\}\times...\times\{1,...,k_i\}\times...\times\{1,...,k_n\},\\
A^\prime&=\{1,...,k_1\}\times...\times\{1,...,k_i+1\}\times...\times\{1,...,k_n\}.
\end{align*}
Suppose that $h$ succeeds on $\U(A^\prime)$. For each $U=(u_1,...,u_n)\in\U(A)$, define another game $U^\prime=(u_1^\prime,...,u_n^\prime)\in\U(A^\prime)$ such that for every $j\in\{1,...,n\}$ and $a\in A$,
\begin{align*}
&u_j^\prime(a)=u_j(a),\\
&u_j^\prime(a_1,...,k_i+1,...,a_n)=u_j(a_1,...,k_i,...,a_n).
\end{align*}
Thus in $U^\prime$ every player is always indifferent to whether player $i$ plays $k_i$ or $k_i+1$.

We now define a strategy mapping $f$ for games on $A$. For every $U\in\U(A)$, $f^U$ is given by
\begin{align*}
&\Pr\left(f^U_j(a)=h^{U^\prime}_j(a_1,...,k_i+1,...,a_n)\,\big|\,a_i=k_i\right)=1/2,\\
&\Pr\left(f^U_j(a)=h^{U^\prime}_j(a)\,\big|\,a_i=k_i\right)=1/2,\mbox{ and}\\
&\Pr\left(f^U_j(a)=h^{U^\prime}_j(a)\,\big|\,a_i\neq k_i\right)=1,
\end{align*}
for every $a\in A$ and $j\neq i$. That is, whenever the players see that player $i$ has played $k_i$, each chooses independently at random to interpret that action either as $k_i$ or $k_i+1$, then plays the action prescribed by $h^{U^\prime}$. Player $i$ behaves similarly under $f$, but we have to ensure that it's never instructed to play $k_i+1$:
\begin{align*}
&\Pr\left(f^U_j(a)=\min\{k_i,h^{U^\prime}_i(a_1,...,k_i+1,...,a_n)\}\,\big|\,a_i=k_i\right)=1/2\\
&\Pr\left(f^U_j(a)=\min\{k_i,h^{U^\prime}_i(a)\,\big|\,a_i=k_i\right)=1/2\\
&\Pr\left(f^U_j(a)=h^{U^\prime}_i(a)\,\big|\,a_i\neq k_i\right)=1.
\end{align*}

Now fix $U=(u_1,...,u_n)\in\U(A)$, and assume that $U$ has at least one PNE $p\in A$. To see that $p$ is an absorbing state for $f^U$, we consider two cases. First, suppose that $p_i\neq k_i$. Then $p$ is also a PNE for $U^\prime$, hence $p$ is an absorbing state for $h^{U^\prime}$. So for $j\neq i$, $f^U_j(p)=h^{U^\prime}_j(p)=p_j$, and $f^U_i(p)=\min\{k_i,h^{U^\prime}_i(p)\}=\min\{k_i,p_i\}=p_i$. Now suppose instead that $p_i=k_i$. Then both $p$ and $p^\prime=(p_1,...,k_i+1,...,p_n)$ are stable states for $h(U^\prime)$. So for $j\neq i$,
\begin{align*}f^U_j(p)&\in \{h^{U^\prime}_j(p_1,...,k_i+1,...,p_n),f^{U^\prime}_j(p)\}\\
&= \{h^{U^\prime}_j(p^\prime),h^{U^\prime}_j(p)\}\\
&= \{p_j\},
\end{align*}
and
\begin{align*}
f^U_i(p)&\in\big\{\min\{k_i,h^{U^\prime}_i(p_1,...,k_i+1,...,p_n)\},\min\{k_i,h^{U^\prime}_i(p)\}\big\}\\
&\subseteq \big\{\min\{k_i,h^{U^\prime}_i(p^\prime)\},k_i,h^{U^\prime}_i(p)\big\}\\
&= \big\{\min\{k_i,p^\prime_i\},k_i,p_i\big\}\\
&=\{p_i\}.
\end{align*}
Thus $p$ is an absorbing state for $f^U$.

It remains to show that $f^U$ always reaches a PNE. Let $a\in A\subseteq A^\prime$. Since $U^\prime$ has a PNE and $h$ succeeds on $\U(A^\prime)$, $U^\prime$ has some PNE $q=(q_1,...,q_n)\in A^\prime$ such that $q$ is $h^{U^\prime}$-reachable from $a$. So for some $T\in\N$, theres is a partial $h^{U^\prime}$-run $a^{(0)},...,a^{(T)}$ such that $a^{(0)}=a$ and $a^{(T)}=q$. Since $q$ is a PNE for $U^\prime$, $q^\prime=(q_1,...,\min\{q_i,k_i\},...,q_n)$ is a PNE for both $U$ and $U^\prime$.

Now let $b^{(0)},...,b^{(T)}$ be a partial $f^U$-run such that $b^{(0)}=a$. Suppose, for some $0\leq t<T$, that
\[b^{(t)}=(a^{(t)}_1,...,\min\{a^{(t)}_i,k_i\},...,a^{(t)}_n).\]
Then for $j\neq i$, $b^{(t+1)}_j=f^U_j(b^{(t)})=h^{U^\prime}_j(a^{(t)})$ with probability at least $\frac{1}{2}$, and $b^{(t+1)}_i=f^U_i(b^{(t)})=\min\{h^{U^\prime}_j(a^{(t)}),k_i\}$ with probability at least $\frac{1}{2}$. So with positive probability,
\[b^{(t+1)}=(a^{(t+1)}_1,...,\min\{a^{(t+1)}_i,k_i\},...,a^{(t+1)}_n).\]
By induction, $\Pr[b^{(T)}=q^\prime]>0$, i.e., $q^\prime$ is $f^U$-reachable from $a$. We conclude that $f$ succeeds on $\U(A)$, and by Observation \ref{obs:5} it follows that $h$ succeeds on $\U(A)$.
\end{proof}

\newtheorem*{lem-9}{Theorem~\ref{lem:9}}
\begin{lem-9}
Let $n\geq 2$ and $k_1,...,k_n,k_{n+1}\geq2$. If $h$ succeeds on $\U(k_1,...,k_n,k_{n+1})$, then $h$ succeeds on $\U(k_1,...,k_i,...,k_n)$. The same is true if we replace $\U$ with $\G$.
\end{lem-9}
\begin{proof}
Let
\begin{align*}
A&=\{1,...,k_1\}\times...\times\{1,...,k_n\},\\
A^\prime&=\{1,...,k_1\}\times...\times\{1,...,k_n\}\times\{1,...,k_{n+1}\}.
\end{align*}
Suppose that $h$ succeeds on $\U(A^\prime)$, and for each $U=(u_1,...,u_n)\in\U(A)$, define a game $U^\prime=(u^\prime_1,...,u^\prime_n,u^\prime_{n+1})\in\U(A^\prime)$ such that for every $x=(x_1,...,x_n,x_{n+1})\in A^\prime$,
\[u^\prime_i(x)=u_i((x_1,...,x_n))\]
for each player $i\in\{1,...,n\}$ and
\[
u^\prime_{n+1}(x)=\left\{\begin{array}{ll}
1\;&\;\mbox{if }x_{n+1}=1\\
0\;&\;\mbox{otherwise}.
\end{array}\right.
\]
Informally, the first $n$ players are apathetic about player $n+1$'s action, and player $n+1$ always prefers to play $1$. Notice that $x=(x_1,...,x_n)\in A$ is a PNE for $U$ if and only if $(x_1,...,x_n,1)$ is a PNE for $U^\prime$.

Given a game $U$, we use $U^\prime$ to define a strategy mapping $f$ for games on $A$. For each $x=(x_1,...,x_n)\in A$ and $i\in\{1,...,n\}$,
\[f^U_i(x)=h^{U^\prime}_i((x_1,...,x_n,1)).\]

Now fix $U=(u_1,...,u_n)\in\U(A)$ and $a=(a_1,...,a_n)\in A$, and assume that $U$ has at least one pure Nash equilibrium. Then $U^\prime$ does also, so letting $a^\prime=(a_1,...,a_n,1)\in A^\prime$, some PNE $p^\prime=(p_1,...,p_n,1)$ for $U^\prime$ is $h^{U^\prime}$-reachable from $a^\prime$. We show that $p=(p_1,...,p_n)$, which is a PNE for $U$, is $f^U$-reachable from $a$.

Since $p^\prime$ is $h^{U^\prime}$-reachable from $a^\prime$, there is a partial $h^{U^\prime}$-run $a^{(0)},...,a^{(T)}$, for some $T\in\N$, such that $a^{(0)}=a^\prime$ and $a^{(T)}=p^\prime$. For each $t\in\{0,...,T-1\}$, if $a^{(t)}_{n+1}=1$, then player $n+1$ is best-replying at $a^{(t)}$, so $a^{(t+1)}_{n+1}=1$. Thus player 1 is playing 1 at every state in the partial run. Now let $b^{(0)},...,b^{(T)}$ be a partial $f^U$-run such that $b^{(0)}=a$.  At each step $t$, if $a^{(t)}=(b^{(t)}_1,....,b^{(t)}_n,1)$, then
\[f^U(b^{(t)})=h^{U^\prime}(b^{(t)}_1,....,b^{(t)}_n,1)=h^{U^\prime}(b^{(t)}),\]
so w.p.p. $a^{(t+1)}=(b^{(t+1)}_1,....,b^{(t+1)}_n,1)$. It follows that w.p.p. $a^{(T)}=p^\prime$, i.e., $b^{(T)}=p$. Thus $p$ is $f^U$-reachable from $a$, so $f$ succeeds on $\U(A)$. By Observation \ref{obs:5}, then, $h$ succeeds on $\U(A)$.

For the second part of the lemma, simply notice that $U^\prime$ is generic whenever $U$ is, thus the above argument still holds when $\G(A)$ is substituted for $\U(A)$.
\end{proof}

\newtheorem*{lem-10}{Theorem~\ref{lem:10}}
\begin{lem-10}
No historyless uncoupled stationary strategy mapping succeeds on $\U(2,2,2)$.
\end{lem-10}
\begin{proof}
Let $A=\{1,2,3\}$. By Observation \ref{obs:5} it suffices to show that $h$ does not succeed on $\U(A)$. Consider the game $U=(u_1,u_2,u_3)\in\U(A)$ where $u_i((x,y,z))$ is the $i$th coordinate of $M_{x}[y,z]$, for
\[
M_1=\left[\begin{matrix}
1,1,1\;&\;1,0,1\\
1,0,0\;&\;0,1,1
\end{matrix}
\right]\;\;\;\;\;\;
M_2=\left[
\begin{matrix}
0,1,0\;&\;0,1,1\\
0,0,0\;&\;1,0,1
\end{matrix}
\right].
\]
The unique PNE of $U$ is $p=(1,1,1)$. Let $a\in A$ with $a_3=2$. Then $h^U_3(a)=2$, since $2\in BR_3(a)$ for every $a$. It follows that under $h^U$, if the third player initially plays $2$, then it will never play $1$, so $p$ is not $h^U$-reachable from, for example, $(1,1,2)$. Thus $h$ does not succeed on $\U(A)$. 
\end{proof}

\newtheorem*{claimfrom}{Claim from Theorem~\ref{thm:11}}
\begin{claimfrom}
If either player has a strictly dominant action in $U^0$ or $U^1$, then some PNE is $h^U$-reachable from $a$.
\end{claimfrom}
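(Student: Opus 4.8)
The plan is to analyze $U$ through its two two-player ``slices'' $U^0$ and $U^1$, using the strictly dominant action to force convergence in one slice and Lemma~\ref{lem:12} to navigate the other. Since $h$ is defined intrinsically from best replies, and both the hypothesis and the slices are symmetric under relabeling players~$1,2$ and under swapping player~$3$'s two actions, I may assume player~$1$ has a strictly dominant action $d_1$ in $U^0$. Let $d_2$ be the unique (by genericity of $U^0$) action maximizing $u^0_2(d_1,\cdot)$, so $BR^{U^0}_2(d_1,a_2)=\{d_2\}$ for every $a_2$. I may also assume $U$ has a PNE, since otherwise the statement is vacuous.

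I would first record four basic facts. (1) Any PNE of $U$ with player~$3$ playing $0$ must be $(d_1,d_2,0)$: player~$1$ best-replying forces $d_1$, and then player~$2$ best-replying forces $d_2$; hence by genericity $(d_1,d_2,0)$ is a PNE of $U$ iff $u_3(d_1,d_2,0)>u_3(d_1,d_2,1)$. (2) From every profile of $A^\prime$, $h^{U^0}$ reaches $(d_1,d_2)$ within two steps w.p.p.: a player~$1$ not at $d_1$ moves to $d_1$ w.p.p., after which player~$2$, if not already at $d_2$, moves there w.p.p.\ while player~$1$ stays; since player~$3$ stays fixed over finitely many steps w.p.p., $(d_1,d_2,0)$ is $h^U$-reachable from every state with player~$3$ at $0$. (3) The same argument inside $U^1$ shows that if either player has a strictly dominant action in $U^1$, then $U^1$ has a unique PNE, which is $h^{U^1}$-reachable from every profile and hence, lifted to the $1$-slice, $h^U$-reachable from every state with player~$3$ at $1$. (4) ``Slice-switching'': if $(v_1,v_2)$ is a PNE of $U^1$ but $(v_1,v_2,1)$ is not a PNE of $U$, then by genericity player~$3$ strictly prefers $0$ there, so w.p.p.\ $(v_1,v_2,1)$ is followed by $(v_1,v_2,0)$ (players~$1,2$ are best-replying and repeat); symmetrically, if $(d_1,d_2,0)$ is not a PNE of $U$, then w.p.p.\ $(d_1,d_2,0)$ is followed by $(d_1,d_2,1)$.

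The body is a case split on whether $(d_1,d_2,0)$ is a PNE of $U$. If it is, I claim it is $h^U$-reachable from every state. From a player-$3$-at-$0$ state this is fact~(2). From a player-$3$-at-$1$ state: if that state is already a PNE of $U$ we are done; otherwise I drive the $U^1$-dynamics (which govern players~$1,2$ while player~$3$ sits at $1$) either to $(d_1,d_2)$ --- possible by Lemma~\ref{lem:12} when $U^1$ has no strictly dominant action and the current profile is not a PNE of $U^1$ --- so that $(d_1,d_2,1)$ is reached and then player~$3$ moves to $0$ by~(4); or, when $U^1$ has a strictly dominant action, to its unique PNE by~(3), from which we either stop (if it is a PNE of $U$) or fall into the $0$-slice by~(4) and~(2); the remaining bookkeeping subcase, where the current profile is itself a PNE of $U^1$, is again handled by~(4) and~(2). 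If instead $(d_1,d_2,0)$ is not a PNE of $U$, then the PNE of $U$ must lie in the $1$-slice, so it equals $(q_1,q_2,1)$ with $(q_1,q_2)$ a PNE of $U^1$; from any state I reach the $1$-slice (from a player-$3$-at-$0$ state via~(2) then~(4)) and then reach $(q_1,q_2,1)$ --- immediately from~(3) when $U^1$ has a strictly dominant action (so $(q_1,q_2)$ is its unique PNE), and via Lemma~\ref{lem:12} otherwise, noting that if we first pass through $(d_1,d_2)$ and it is already a PNE of $U^1$ then $(d_1,d_2,1)$ is already a PNE of $U$, since player~$3$ strictly prefers $1$ there in this case.

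The main obstacle is not a single hard step but making the case analysis watertight: a priori the play might oscillate between $(d_1,d_2,0)$ and a PNE of $U^1$ lifted to the $1$-slice without ever hitting a PNE of $U$. What closes this off is fact~(1) together with the assumption that $U$ has a PNE --- this pins the PNE to a known slice --- combined with the fact that in the $1$-slice the reachability I need is exactly what Lemma~\ref{lem:12} supplies when $U^1$ has no strictly dominant action, while a strictly dominant action in $U^1$ makes its PNE unique and hence the very one realizing $U$'s PNE. Everything else is routine: each ``w.p.p.'' invoked (player~$3$ fixed along a bounded run; players~$1,2$ repeating a profile while player~$3$ switches; a non-best-replying player landing on a chosen action) has positive probability directly from the definition of $h$.
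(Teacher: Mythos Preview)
Your proof is correct and takes essentially the same approach as the paper: reduce by symmetry to player~$1$ having a strictly dominant action in $U^0$, use it to force convergence in the $0$-slice, handle the $1$-slice via Lemma~\ref{lem:12} (or a further dominant action), and switch slices via player~$3$ whenever the current slice's PNE fails to be a PNE of $U$. The paper organizes its case split around the value of $a_3$ and the existence of a PNE in each slice, whereas you first split on whether $(d_1,d_2,0)$ is a PNE of $U$; the content is the same (one small wording slip: when you reach $(d_1,d_2,1)$ in Case~A and invoke fact~(4), $(d_1,d_2)$ need not be a PNE of $U^1$, but the transition is still valid directly from the definition of $h$, as your closing ``routine'' remark already notes).
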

\begin{proof}
Suppose that player 1 has a strictly dominant action $\alpha$ in $U^0$, and consider five cases.
\begin{enumerate}
\item $U$ has a PNE $(p_1,p_2,0)$, and $a_3=0$. Then player 1 is best-replying at $a$ only if $a_1=\alpha=p_1$, so w.p.p. $h^U(a)=(p_1,a_2,0)$. Player 2 is best-replying at $(p_1,a_2,0)$ only if $a_2=p_2$, so w.p.p. $h^U(h^U(a))=(p_1,p_2,0)$.
\item $U$ has a PNE $(q_1,q_2,1)$, $a_3=1$, and $BR^U_3(a)=1$.
\subitem If some player has a strictly dominant action in $U^1$, then this is symmetric to the situation described in case 1, and $q$ is $h^U$-reachable from $a$.
\subitem So assume that no player has a strictly dominant action in $U^1$. If $a$ is not a PNE for $U$, then $(a_1,a_2)$ is not a PNE for $U^1$. So by Lemma $\ref{lem:12}$, $(q_1,q_2)$ is $h^{U^1}$-reachable from $(a_1,a_2)$, i.e., the PNE $(q_1,q_2,1)$ is $h^U$-reachable from $a$.
\item $U$ has no PNE $(p_1,p_2,0)$, and $a_3=0$. As in case 1, w.p.p. $h^U(a)=(\alpha,a_2,0)$. Let $b_2=BR_2(\alpha,a_2,0)$. Then w.p.p. $h^U((\alpha,a_2,0))=(\alpha,b_2,0)$, and player 3 is not best-replying at $(\alpha,b_2,0)$ since it is not a PNE for $U$. Thus letting $b=(\alpha,b_2,1)$, w.p.p. $h^U((\alpha,b_2,0))=b$, so $b$ is $h^U$-reachable from $a$, and $b$ is an instance of case 2.
\item $U$ has a PNE $(q_1,q_2,1)$, $a_3=1$, and $BR^U_3(a)=0$.  Then w.p.p. $h^U(a)=(a_1,a_2,0)$, which is an instance of case 1 or 3. 
\item $U$ has no PNE $(q_1,q_2,1)$, and $a_3=1$.
\subitem If some player has a strictly dominant action in $U^1$, then w.p.p. that action will be played in $h^U(a)$, and w.p.p. the other player will play its best reply to that action in the next stage. Then the first two players are playing a PNE for $U^1$, so player 3 is not best-replying and may play 0 in the next round, giving an instance of case 1.
\subitem So assume that no player has a strictly dominant action in $U^1$. There is some $(b_1,b_2)\in A^\prime$ such that $BR_3(b_1,b_2,1)=0$, so w.p.p. $h^U(b_1,b_2,1)=(b_1,b_2,0)$. If $(a_1,a_2)$ is a PNE for $U^1$, then player 3 is not best replying and w.p.p. $h^U(a)=(a_1,a_2,0)$. Otherwise by Lemma \ref{lem:12} $(b_1,b_2)$ is $h^{U^1}$-reachable from $(a_1,a_2)$, so $(b_1,b_2,0)$, which is an instance of case 1, is $h^U$-reachable from $a$.
\end{enumerate}
It follows that some PNE for $U$ is $h^U$-reachable from every $a\in A$. By symmetry, the same holds whenever either player has a strictly dominant action in either $U^0$ or $U^1$.

\end{proof}

\newtheorem*{lem-12}{Theorem~\ref{lem:12}}
\begin{lem-12}
Let $k,l\in\N$, and let $U\in\G(k,\ell)$ be a game in which neither player has a strictly dominant action. For every $a,b\in A$ such that $a$ is not a PNE for $U$, $b$ is $h^U$-reachable from $a$.
\end{lem-12}
\begin{proof}
If $k=\ell=2$, then each player either prefers to match or to mismatch the other's action, and lemma holds by routine inspection of the four possibilities. So assume $\ell>2$.

Let $a,b\in A$, where $a$ is not a PNE for $U$. Notice that because $U$ is generic, $A$ contains exactly $\ell$ states where player 1 is best-replying and $k$ states where player 2 is best-replying, so there are at most $k+\ell$ states where either player is best-replying. And for any $x,y\in A$, if neither player is best-replying at $x$, then $h^U(x)=y$ with probability $\frac{1}{k\ell}$. Hence it suffices to show that more than $k+\ell$ distinct states in $A$ are reachable from $a$.

If player 2 is best-replying at $a$, then since player 2 has no dominant action, player 1 has some action $a^\prime_1$ such that player 2 is not best-replying at $(a^\prime_1,a_2)$. And player 1 is not best replying at $a$ (since $a$ is not a PNE), so w.p.p. $h^U(a)=(a^\prime_1,a_2)$. Thus some state in which player 2 is not best-replying is reachable from $a$.

Let $x=(\alpha,\beta)$ be such a state and consider the number of distinct states reachable from $x$. Player 2 might play any of its actions, so there are at least the $\ell$ possibilities $(\alpha,1),...,(\alpha,\ell)$ for $h^U(x)$. Since player 1 has no dominant action, there is some $\gamma\in\{1,...,\ell\}$ such that, letting $y=(\alpha,\gamma)$, $\alpha\not\in BR_1(y)$, so player 1 is not best-replying at $y$. By the same logic, $(1,\gamma),...,(k,\gamma)$ are possibilities for $h^U(y)$, and there is a $z=(\delta,\gamma)$ such that player 2 is not best-replying at $z$ and $(\delta,1),...,(\delta,\ell)$ are possibilities for $h^U(y)$.

We've shown that $(\alpha,1),...,(\alpha,\ell),(1,\gamma),...,(k,\gamma),(\delta,1),...,(\delta,\ell)$ are all reachable from $x$. Suppose that $\alpha=\delta$. Then $y=z$ and neither player is best replying at $y$, so all of $A$ is reachable from $y$. Otherwise,
\begin{align*}
\big|\{(\alpha,1),...,(\alpha,\ell),(1,\gamma),...,(k,\gamma),(\delta,1),...,(\delta,\ell)\}\big|&\geq k+2\ell-2\\
&>k+\ell.
\end{align*}
Since $x$ and $y$ are both reachable from $a$, this completes the proof.
\end{proof}

\newtheorem*{lem-14}{Theorem~\ref{lem:14}}
\begin{lem-14}
For every $k,\ell\geq 2$, $h$ does not succeed on $\G(2,2,k,\ell)$.
\end{lem-14}
\begin{proof}
Let $A=\{1,2\}\times \{1,2\}\times \{1,...,k_3\}\times\{1,...,k_4\}$, with $k_3,k_4\geq 2$. By Observation \ref{obs:5}, it suffices to show that $h$ does not succeed on $\G(A)$. Let $U=(u_1,u_2,u_3,u_4)\in\G(A)$ be defined as follows. For every $a=(a_1,a_2,a_3,a_4)\in A$,
\begin{align*}
u_1(a)&=\left\{\begin{array}{ll}
1\;&\;\mbox{if }a_1=a_2\\
0\;&\;\mbox{otherwise,}
\end{array}\right.\\
u_2(a)&=\left\{\begin{array}{ll}
1\;&\;\mbox{if }a_1=a_2\mbox{ XOR }a_3=a_4=1\\
0\;&\;\mbox{otherwise,}
\end{array}\right.\\
u_3(a)=u_4(a)&=\left\{\begin{array}{ll}
1\;&\;\mbox{if }a_3=a_4\\
0\;&\;\mbox{otherwise.}
\end{array}\right.
\end{align*}
Informally, player 1 always wants to match player 2's action, players 3 and 4 always want to match each other's actions, and player 2 wants to match player 1's action except when players 3 and 4 are both playing $1$, in which case player 2 wants to \emph{mismatch} player 1's action.

$U$ has the unique PNE $(2,2,2,2)$. Let $a\in A$ such that $a_3=a_4$. Then players 3 and 4 are both best-replying, so $h^U_3(a)=h^U_4(a)=1$. It follows that $(2,2,2,2)$ is not $h^U$-reachable from $(1,1,1,1)$, so $h$ does not succeed on $\G(A)$.
\end{proof}

\newtheorem*{lem-15}{Theorem~\ref{lem:15}}
\begin{lem-15}
For every $k_1,k_2,k_3\geq 3$, $h$ does not succeed on $\G(k_1,k_2,k_3)$
\end{lem-15}
\begin{proof}
Let $A=\{1,...,k_1\}\times \{1,...,k_2\}\times \{1,...,k_3\}$, with $k_1,k_2,k_3\geq 3$.
By Observation \ref{obs:5}, it suffices to show that $h$ does not succeed on $\G(A)$. Hart and Mas-Colell~\cite{HarMas06} give an example of a 3-by-3-by-3 generic game on which no historyless uncoupled strategy mapping succeeds. The game is $U=(u_1,u_2,u_3)\in\U(\{1,2,3\}^3)$ where $u_i((x,y,z))$ is the $i$th coordinate of $M_x[y,z]$, for
\begin{align*}
M_1&=\begin{bmatrix}
0,0,0\;&\;0,4,4\;&\;2,1,2\\
4,4,0\;&\;4,0,4\;&\;3,1,3\\
1,2,3\;&\;1,3,3\;&\;0,0,0
\end{bmatrix}\\
M_2&=\begin{bmatrix}
4,0,4\;&\;4,4,0\;&\;3,1,3\\
0,4,4\;&\;0,0,0\;&\;2,1,2\\
1,3,3\;&\;1,2,2\;&\;0,0,0
\end{bmatrix}\\
M_3&=\begin{bmatrix}
2,2,1\;&\;3,3,1\;&\;0,0,0\\
3,3,1\;&\;2,2,1\;&\;0,0,0\\
0,0,0\;&\;0,0,0\;&\;6,6,6
\end{bmatrix}.
\end{align*}
They observe that $U$ has the unique PNE $(3,3,3)$, and prove that if $a\in A$ contains both a $1$ and a $2$, then for any uncoupled historyless strategy mapping $f$, $f^U(a)$ also contains both a $1$ and a $2$. To prove the lemma, we pad the game with extra actions and show that the expanded game retains this property.

We define the expanded game $U^\prime=(u^\prime_1,u^\prime_2,u^\prime_3)\in\G(A)$ by, for each player $i$ and profile $a=(a_1,a_2,a_3)\in A$, \[
u^\prime_i((a_1,a_2,a_3))=\left\{\begin{array}{ll}0\;&\;\mbox{if }a_i>3\\
u_i((\min\{a_1,3\},\min\{a_2,3\},\min\{a_3,3\}))\;&\;\mbox{otherwise.}
\end{array}\right.
\]
So for each new action $a_i>3$ for player $i$, $a_i$ is weakly dominated and both other players are indifferent to whether $i$ plays $a_i$ or $3$.

Suppose that at $a\in A$ at least one player is playing $1$ and at least one player is playing $2$. If $a\in\{1,2,3\}^3$, then since all the new actions are weakly dominated, Hart and Mas-Colell's analysis applies directly: a player playing $1$ and a player playing $2$ are best-replying, so $h^U(a)$ contains both a $1$ and a $2$. Otherwise, one player $i$ is playing $a_i>3$. In this case the other two players are best-replying, and they played $1$ and $2$, so $h^U(a)$ again contains both a $1$ and a $2$. It follows that the players will never reach the PNE $(3,3,3)$ starting from, for example, $(1,2,1)$, when following $h^U$.
\end{proof}

\newtheorem*{thm-18}{Theorem~\ref{thm:18}}
\begin{thm-18}
For every profile space $A$, no deterministic uncoupled historyless stationary strategy mapping succeeds on all games $(A,U)$.
\end{thm-18}
\begin{proof}
Except when $A=A_1\times A_2$ and either $|A_1|$ or $|A_2|$ is $2$, this follows directly from Theorem \ref{thm:7}. So let $k\geq 2$ and $A=\{1,2\}\times\{1,...,k\}$, and assume that some deterministic historyless uncoupled strategy mapping $f$ succeeds on $\U(A)$.

Consider the game $U=(u_1,u_2)\in\U(A)$ defined by
\begin{align*}
u_1(a)&=\left\{\begin{array}{ll}
1\;&\;\mbox{if }a_1=1\\
0\;&\;\mbox{if }a_1=2
\end{array}\right.\\
u_2(a)&=\left\{\begin{array}{ll}
1\;&\;\mbox{if }a_2=a_1=1\mbox{ or }a_2\geq a_1=2\\
0\;&\;\mbox{otherwise},
\end{array}\right.
\end{align*}
for every $a=(a_1,a_2)\in A$. The unique PNE of this game is $p=(1,1)$, so since we assumed that $f$ succeeds on $\U(A)$, $p$ is $f^U$-reachable from every $a\in A$.

Define a new game $U^\prime=(u^\prime_1,u^\prime_2)\in\U(A)$ by
\begin{align*}
u^\prime_1(b)&=\left\{\begin{array}{ll}
2\;&\;\mbox{if }b_2\geq x_1=2\mbox{ and }f^U_2(1,b_2)=1\\
u_1(b)\;&\;\mbox{otherwise}
\end{array}\right.\\
u^\prime_2(b)&=u_2(b),
\end{align*}
for every $b=(b_1,b_2)\in A$. Informally, each player's preferences are exactly the same as in $U$, except that player 1 now prefers to play $2$ whenever $f^U$ would instruct player 2 to play $1$. Notice that $U^\prime$ also has $p=(1,1)$ as its unique PNE, and that by uncoupledness, $f^{U^\prime}_2(b)=f^U_2(b)$ for every $b\in A$.

Let $a=(1,\alpha)\in A$, for some $\alpha\neq 1$. Notice that $u^\prime_1(a)=u_1(a)=1$, and consider two cases.

\begin{enumerate}
\item $f^U_2(a)=1$. Then $u^\prime_1((2,\alpha))=2$, so player 1 is not $U^\prime$-best-replying at $a$. Thus by Observation \ref{obs:4} $f^{U^\prime}_1(a)\neq 1$. Since $f^{U^\prime}_2(a)=f^U_2(a)=1$, we have $f^{U^\prime}(a)=(2,1)$.
\item $f^U_2(a)\neq 1$. Then $u^\prime_1((2,\alpha))=u_1((2,\alpha))=0$, so player 1 is $U^\prime$-best-replying at $a$, so by Observation \ref{obs:4}, $f^{U^\prime}_1(a)=1$. Since $f^{U^\prime}_2(a)=f^U_2(a)\neq1$, we have $f^{U^\prime}(a)=(1,\beta)$ for some $\beta\neq1$.
\end{enumerate}
Now let $b=(2,1)$. Then $u^\prime_1(b)=u_1(b)=0$, and $u^\prime_2(b)=u_2(b)=0$, so neither player is best-replying. So by Observation \ref{obs:4} $f^{U^\prime}_1(b)\neq 2$ and $f^{U^\prime}_2(b)\neq 1$, i.e., $f^{U^\prime}(a)=(1,\beta)$ for some $\beta\neq1$. It follows that $p=(1,1)$ is not $f^{U^\prime}$-reachable from $(2,1)$, so $f$ does not guarantee convergence to a PNE in $U^\prime$, hence $f$ does not succeed on $\U(A)$.
\end{proof}

\end{document}